\documentclass[sigconf,nonacm]{acmart}

\usepackage{amsmath}
\usepackage{algorithmic}
\usepackage{graphicx}
\usepackage{textcomp}
\usepackage{xcolor}

\usepackage{pgfplots}
\usepackage{pgfplotstable}
\pgfplotsset{compat=newest}
\usepackage{algorithm}
\usepackage{makecell}
\usepackage{enumitem}
\usepackage{multirow}
\usepackage{hhline}

\usepackage{tikz}
\usetikzlibrary{arrows,shapes,trees}
\usetikzlibrary{arrows.meta}

\title{Replication in Graph Partitioning and Scheduling Problems}

\author{P\'al Andr\'as Papp}
\orcid{0009-0005-6667-802X}
\email{pal.andras.papp@huawei.com}
\affiliation{
  \department{Computing Systems Lab}
  \institution{Huawei Zurich Research Center}
  \city{Zurich}
  \country{Switzerland}
}

\author{Toni B\"ohnlein}
\email{toni.boehnlein@huawei.com}
\orcid{0009-0001-2152-022X}
\affiliation{
  \department{Computing Systems Lab}
  \institution{Huawei Zurich Research Center}
  \city{Zurich}
  \country{Switzerland}
}

\author{Albert-Jan N. Yzelman}
\email{albertjan.yzelman@huawei.com}
\orcid{0000-0001-8842-3689}
\affiliation{
 \department{Computing Systems Lab}
  \institution{Huawei Zurich Research Center}
  \city{Zurich}
  \country{Switzerland}
}

\begin{document}

\begin{abstract}
The efficient parallel execution of complex computations requires balancing the workload across processors while minimizing the communication between them. This inherent trade-off is often captured by graph partitioning or DAG scheduling problems. For the sake of model simplicity, most works on these problems assume that nodes can be assigned to only a single processor. However, in reality, \emph{replicating} an operation on several processors can easily be beneficial: it may increase the computational costs only by a small amount, while significantly reducing the communication requirements.

Our goal is to provide a comprehensive analysis of the impact of replication on partitioning and scheduling problems. On the theoretical side, we show that for graph partitioning, replication makes the problem significantly harder in terms of approximation complexity, whereas for scheduling, its impact on complexity seems less prominent. On the experimental side, we conduct a thorough analysis of the cost reduction obtainable by replication, on a wide range of graphs from real-world applications. For hypergraph partitioning, we use Integer Linear Programming (ILP) formulations to compare the optimal costs; our experiments show that replication can reduce the cost by $17\%$--$65\%$ on average, or even entirely remove the need for communication in some cases. For DAG scheduling, we similarly use ILPs on smaller graphs, and develop a sophisticated heuristic that is also applicable to much larger workloads. Our experiments here demonstrate a mean cost reduction of $11.61\%$--$23.13\%$ with replication, or even up to $58.17\%$ in some cases.
\end{abstract}

\begin{CCSXML}
<ccs2012>
<concept>
<concept_id>10003752.10003753.10003761.10003762</concept_id>
<concept_desc>Theory of computation~Parallel computing models</concept_desc>
<concept_significance>300</concept_significance>
</concept>
</ccs2012>
\end{CCSXML}

\ccsdesc[300]{Theory of computation~Parallel computing models}

\keywords{Partitioning, Scheduling, Hypergraph, DAG, Replication, Duplication}

\maketitle

\section{Introduction}

The efficient parallel execution of complex computations on multiple devices is a fundamental problem in computer science, and a crucial task in various domains like scientific computing, machine learning or big data. Finding the best parallel execution strategy is often a challenging problem with contradicting objectives, e.g.\ distributing the computational workload in a balanced way but also minimizing the costly communication steps between the devices.

In mathematical terms, these tasks are often best captured as a (hyper-)graph partitioning or scheduling problem. The nodes of the graph represent the operations or subtasks in the computation that need to be executed, and the (hyper-)edges represent data dependencies among specific operations. The goal is to find an assignment of the nodes to the available devices which has an acceptable load balance (in terms of e.g.\ computational workload, or memory usage), and also minimizes the communication between the devices.

Beyond this general idea, the concrete details of the model can vary significantly depending on the application. One prominent setting is graph or hypergraph partitioning, where the goal is simply to find an assignment of the nodes to devices. Here, load balancing is typically handled via an explicit balance constraint, i.e.\ an upper bound on the number of nodes in each partition. The objective is to minimize the number of cut (hyper-)edges, i.e.\ those that span multiple partitions. This offers a simple and convenient formulation in many settings where the time dimension of the execution can be ignored due to some modeling choice or external factor.

Another related problem is scheduling, where the computation is often captured as a directed acyclic graph (DAG), with the directed edges representing data dependencies between operations. Scheduling requires us to not only assign each node to a device, but also to a concrete time step. This is indeed crucial in order to capture whether two operations can be executed in parallel when we have precedence constraints. Given an assignment to devices and time steps, a scheduling model defines the time when each operation is finished, based on its predecessors and the potential communication steps needed. This way, a single objective function allows to implicitly combine the goals of load balancing and minimizing communication. DAG scheduling is a general problem model that can capture any static computational task.

The main focus of our paper is the concept of \emph{replication} in these problems, also known as duplication or re-computation in some cases. For simplicity, many works on these problems assume that every node is assigned to a single device only. In contrast, replication means that we are also allowed to assign a node to multiple devices, i.e.\ redundantly execute the same operation on more than one device. This increases the use of compute resources, but on the other hand, it can also significantly decrease the amount of communication required, and hence result in a lower cost altogether. As such, while forbidding replication leads to more convenient mathematical problems, replication is in fact an essential ingredient to realistically capture the optimal costs in these problems.

Replication has been considered many times both in theoretical works and in applications, but we are not aware of any comprehensive study to analyze its impact on partitioning and scheduling problems. Apart from some worst-case examples and lower bounds, many of the fundamental questions remained open. How does replication affect the computational complexity of partitioning or scheduling? How much can it decrease the cost of the optimal solution in more practical cases, i.e.\ benchmarks that describe real-world computations? Are there specific kinds of graphs or applications where replication is particularly useful for reducing costs?

Our main goal in this paper is to understand how replication impacts partitioning and scheduling problems in general. We first define the extension of the hypergraph partitioning and DAG scheduling problems with replication, and show simple examples where replication can decrease the optimal cost by a large factor. We then analyze how replication affects the computational complexity of these problems. In hypergraph partitioning, the problem becomes significantly more challenging: we prove that the optimal solution with replication is NP-hard to approximate to any finite factor. In contrast, in scheduling, our observations suggest that replication has a smaller impact on the problem complexity.

Our main contribution is a comprehensive experimental study of replication in these problems. In hypergraph partitioning, we discuss two ways to formulate partitioning with replication as an Integer Linear Program (ILP). We use a commercial ILP solver to find (close to) optimal solutions to the problem over a diverse hypergraph benchmark that we derive from Mixture of Experts (MoE) model distribution and sparse matrix--vector multiplication (SpMV). We compare the solutions to the optimal non-replicating partitioning on the same problems; our experiments show that replication can reduce the cost by $17\%$--$65\%$ on different datasets, or even entirely remove the need for communication in a few cases.

In DAG scheduling, we focus on the Bulk Synchronous Parallel (BSP) model, a well-established model in parallel computing~\cite{BSPintro}. We also consider an ILP-based approach here, but for scheduling, this is only viable for very small graphs. Hence we also consider a more heuristic method: we develop a complex local-search-like algorithm to introduce replication steps into a schedule whenever this decreases the cost altogether. Our results show that this replicating algorithm achieves a $11.61\%$--$23.13\%$ mean cost reduction compared to an already strong baseline scheduler, or up to a mean of $58.17\%$ for larger communication cost parameters.

\section{Related Work}

Partitioning and scheduling problems have both been extensively studied before. In Section~\ref{sec:model}, we also discuss some concrete use cases and elaborate on the difference between the two settings.

Balanced graph partitioning is a fundamental NP-hard problem that has applications ranging from VLSI design to scientific computing~\cite{HGapplic1}. There is specific line of work on approximation algorithms for the perfectly balanced case (bisection problem), resulting in a $O(\log n)$-factor approximation~\cite{Feige00, Racke08}. Others considered bi-criteria approximations of the same problem where the balance constraint can also be violated to some factor~\cite{Leighton99, Krauthgamer09}. There are similar works for hypergraph partitioning, including both approximation algorithms and hardness results~\cite{BisectionApprox, hypergraphs_opdas}. There are also several works on hierarchical variants of the problems~\cite{Haji14, Racke16, hypergraphs_opdas}.

Due to its practical applications, there are several algorithms and open-source tools to solve the problem in practice, including both exact algorithms~\cite{Pelt15, knigge20} and heuristics~\cite{hMetis, KaHyPar, Parkway, ccatalyurek2023more}.

DAG scheduling has also been studied in various models, ranging from very simple to more realistic. The simplest model assumes free communication between processors~\cite{DAG2proc1, approx1, approx2}; replication offers no advantage in this setting. A somewhat more complex model assumes fixed communication delays between processors~\cite{commDdef1,commDdef2}; here replication can already be beneficial. There are numerous algorithms and hardness results for this model~\cite{commDunit1, commDappR1, recent1}.

There are several more realistic models inspired from practice as well~\cite{multiBSP1, LogP}; however, these are often too complex with many parameters, and hence are almost exclusively studied for concrete computations (i.e.\ specific, very structured DAGs). One prominent example for a middle ground is the Bulk Synchronous Parallel (BSP) model~\cite{BSPintro}, which is accurate enough to be widely used in libraries and applications~\cite{BSPlib, BSPimpl1}, but still simple enough to allow a thorough analysis of its theoretical properties~\cite{BSPbook2, BSP_DAG_opdas} and the optimal schedules for crucial computations~\cite{mccoll24memory}.

Using replication to save communication cost is a very natural idea and has been explored in a range of problem variants and specific applications~\cite{casas2017balanced,hu2019throughput,bozdag2008compaction}. However, we are not aware of any comprehensive study to analyze the impact of replication on general graphs, either with regard to theoretical properties  or experimentally on a range of computational graphs. Specifically, for (hyper-)graph partitioning, to our knowledge, there is no prior work that explicitly considers the extension of the problem with replication, and compares this to the original problem.

For DAG scheduling, there are many works that allow replication, and several of them directly discuss its impact on the problem; however, these exhibit key differences from ours. The paper of Maiti et al.~\cite{recent4} conducts a theoretical analysis of the impact of replication in the communication delay model: they show that the factor of difference between the optimum with and without replication is upper bounded by $O(\log^2{n} \log P)$, and lower bounded by $\Omega(\log n / \log \log n)$, where $n$ and $P$ are the number of nodes and processors, respectively. Ahmad et al.~\cite{ahmad1998exploiting} study heuristics to introduce replications in the communication delay model, assuming $P=\infty$; while they discuss an idea to adapt their method to finite $P$, their experiments do not study the cost reduction for this case. Papp et al.~\cite{multiBSP_opdas} consider a problem variant with I/O costs (red-blue pebbling), and briefly discuss the empirical cost reduction achieved by replication in this model. The work of~\cite{bozdag2008compaction} studies how to remove replications from a schedule in the communication delay model, in order to decrease the number of processors used. There are also several scheduling papers that do allow replication in their model, but do not study its impact explicitly~\cite{commDappR0, commDappR2,munier1997using}.

\section{Background \& problem definitions} \label{sec:model}

\subsection{Preliminaries}

In our problems, a computation is modeled either as a simple graph, a hypergraph, or a directed acyclic graph (DAG) $G=(V,E)$, where $V$ is the set of nodes, and $E$ is the set of (hyper-)edges. In a simple graph / hypergraph / DAG, each $e \! \in \! E$ is an unordered pair / subset / ordered pair of nodes from $V$, respectively. We will use $n=|V|$. In a DAG, directed cycles are also forbidden, and we call the set of nodes with an edge to / from $v \! \in \! V$ the parents / children of $v$. In a hypergraph, a node-hyperedge pair $(v,e)$ with $v \! \in \! e$ is called a pin, and we denote the total number of pins by $\kappa$.

A more realistic setting may also have node weights $\omega \! : \! V \! \rightarrow \! \mathbb{R}^+$ to express the compute cost of each operation. For a subset $V' \subseteq V$, we will use $\omega(V') = \sum_{v \in V'} \omega(v)$. Similarly, we may have weights on the hyperedges $\mu \! : \! E \! \rightarrow \! \mathbb{R}^+$ (in hypergraph partitioning) or on the nodes $\mu \! : \! V \! \rightarrow \! \mathbb{R}^+$ (in DAG scheduling) to express the size of the communicated data. 
The default problem corresponds to all weights being uniformly $1$. In our theoretical work, we consider the unweighted problem in order to derive stronger hardness results; however, we do use weights in the benchmarks for our experiments.

Another crucial parameter is $P$, the number of available devices (e.g.\ processors or cores) to execute the operations. Throughout the paper, we refer to these as processors for simplicity, and we identify them with the indices $\{ 1, ..., P\}$. We also use the shorthand notation $[P] = \{ 1, ..., P\}$. We assume that $P$ is a small constant; indeed, in practice it is typically significantly smaller than $n$.

\subsection{Graph and Hypergraph Partitioning} \label{sec:def_part}

Partitioning problems assign nodes only to processors and not to time steps. Balanced (hyper-)graph partitioning can capture various parallelization tasks without a time dimension, e.g., streaming settings where a continuous flow of data drives computation.

A prime example is the deployment of modern Large Language Models (LLMs), which process a continuous token stream and need to be distributed over multiple devices due to their massive memory footprint. In the recently prominent Mixture of Experts (MoE) LLMs, the optimal distribution of experts to devices can be naturally captured as a partitioning problem based on the co-occurrence pattern of experts~\cite{moetuner,moe_data}. In a hypergraph representation, the nodes represent concrete experts (in one or more layers), and hyperedges are formed from groups of experts that are frequently invoked together. Hypergraph partitioning corresponds to a balanced distribution of experts (respecting memory constraints on each device) that minimizes the inter-device communication while serving requests.

Similarly, the parallelization of sparse matrix--vector multiplication (SpMV) is often modeled as a hypergraph partitioning problem. For instance, in the so-called fine-grained model, each non-zero entry of the matrix becomes a node, and the entries of each column and row form a separate hyperedge. Intuitively, the nodes here represent the multiplication operations with the given non-zero entry, and vertical (respectively, horizontal) hyperedges represent the values of the input (output) vector that need to be used for (aggregated from) the corresponding multiplications~\cite{knigge20, jenneskens22}.

We illustrate the hypergraph partitioning problem in Figure~\ref{fig:hypergraph}.

\subsubsection*{Without replication}

In balanced hypergraph partitioning, the goal is to partition the nodes of a hypergraph into subsets $V_1$, ..., $V_P$ such that $\bigcup_{i \in [P]} V_i = V$. In the base case without replication, we require the partitions to be disjoint, i.e.\ $V_i \cap V_j = \emptyset$ for $i \neq j$.

Given a real parameter $\varepsilon \! \in \! (0, 1)$, we say that a partitioning is balanced if for all $i \! \in \! [P]$, we have $|V_i| \leq \frac{1+\varepsilon}{P} \cdot n$. With node weights, this condition can be replaced by $\omega(V_i) \leq \frac{1+\varepsilon}{P} \cdot \omega(V)$. This balance constraint has a natural interpretation in most applications: it ensures that each processor is assigned approximately the same workload, in order to guarantee efficient parallelization.

Given this balance constraint, the goal of hypergraph partitioning is to ensure that as few edges are cut as possible. In particular, let $\lambda_e$ denote the number of partitions that a hyperedge $e \! \in \! E$ intersects, i.e.\ $\lambda_e = |\{ i \! \in \! [P] \, | \, e \cap V_i \neq \emptyset \}|$. The cost incurred by hyperedge $e$ is then defined simply as $(\lambda_e-1)$, or $\mu(e) \cdot (\lambda_e-1)$ in case of hyperedge weights. The cost of the entire partitioning is obtained by summing up this cost over all hyperedges $e \! \in \! E$.

This $(\lambda_e-1)$-based cost function is a popular choice because it accurately captures the communication costs in many cases. Generally speaking, if the hyperedge represents a piece of data that is initially associated with one of the $\lambda_e$ processors appearing in the hyperedge, then it takes at least $(\lambda_e-1)$ pairwise communications to share this data among all the processors. For instance, in the MoE hypergraphs above, if the KV cache for a request is on one of the $\lambda_e$ devices, then the corresponding token needs to be sent to $(\lambda_e-1)$ other devices in the given layer. In fine-grained SpMV hypergraphs, if an entry of the input vector (respectively, output vector) is originally stored on (aggregated on) one of the $\lambda_e$ processors appearing in the column (row), then this processor needs to send the value to (needs to receive values from) $(\lambda_e-1)$ others during the computation.

Note that in some use cases, a cost of $\lambda_e$ for each hyperedge $e$ might be more realistic. However, this leads to the same optimal solution(s), as it only adds a fixed term of $|E|$ or $\sum_e \mu(e)$ in total.

In the balanced hypergraph partitioning problem (for a fixed $P$ and $\varepsilon$), our input is a hypergraph, and our goal is to find a balanced partitioning of the hypergraph that minimizes the total cost. Regular graph partitioning, a well-studied problem itself, can be obtained as a special case where each hyperedge in the input has size $2$.

\begin{figure}[t]
    \centering
     \resizebox{0.48\textwidth}{!}{\begin{tikzpicture}

	\begin{scope}[ultra thick, densely dotted, rounded corners = 5pt]
	\draw[lightgray] (-11pt,-7pt) -- (-11pt,38pt) -- (5pt,38pt) -- (37pt,21pt) -- (37pt,9pt) -- (5pt,-7pt) -- cycle;
    \draw[lightgray] (-7pt,-10pt) -- (-7pt,40pt) -- (0pt,70pt) -- (18pt,70pt) -- (18pt,52pt) -- (8pt,-10pt) -- cycle;
    \draw[darkgray] (3pt,67pt) -- (87pt,67pt) -- (87pt,54pt) -- (56pt,38pt) -- (44pt,38pt) -- (3pt,54pt) -- cycle;
    \draw[lightgray] (87pt,70pt) -- (86pt,30pt) -- (80pt,-10pt) -- (62pt,-10pt) -- (62pt,8pt) -- (72pt,70pt) -- cycle;
    \draw[darkgray] (77pt,-7pt) -- (65pt,-7pt) -- (24pt,10pt) -- (24pt,22pt) -- (35pt,22pt) -- (77pt,6pt) -- cycle;
    \draw[lightgray] (90pt,23pt) -- (75pt,23pt) -- (43pt,40pt) -- (43pt,52pt) -- (54pt,52pt) -- (90pt,34pt) -- cycle;
    \draw[darkgray] (22pt,5pt) -- (22pt,21pt) -- (43pt,55pt) -- (57pt,55pt) -- (57pt,37pt) -- (38pt,5pt) -- cycle;

	\end{scope}

    \begin{scope}[very thick, dashed, rounded corners = 5pt]
    \draw (-17pt,-14pt) -- (-17pt,74pt) -- (38pt,74pt) -- (38pt,-14pt) -- cycle;
    \draw (96pt,-14pt) -- (96pt,74pt) -- (42pt,74pt) -- (42pt,-14pt) -- cycle;
    \end{scope}

    \node[anchor=center] at (10pt,-25pt) {\large $V_1$};
    \node[anchor=center] at (70pt,-25pt) {\large $V_2$};

	\draw[black, fill=white] (10pt,60pt) circle (1.0ex);
	\draw[black, fill=white] (0pt,30pt) circle (1.0ex);
	\draw[black, fill=white] (0pt,0pt) circle (1.0ex);
	\draw[black, fill=white] (50pt,45pt) circle (1.0ex);
	\draw[black, fill=white] (30pt,15pt) circle (1.0ex);
	\draw[black, fill=white] (80pt,60pt) circle (1.0ex);
	\draw[black, fill=white] (80pt,30pt) circle (1.0ex);
	\draw[black, fill=white] (70pt,0pt) circle (1.0ex);


	\begin{scope}[ultra thick, densely dotted, rounded corners = 5pt]
	\draw[lightgray] (139pt,-7pt) -- (139pt,38pt) -- (155pt,38pt) -- (187pt,21pt) -- (187pt,9pt) -- (155pt,-7pt) -- cycle;
    \draw[lightgray] (143pt,-10pt) -- (143pt,40pt) -- (150pt,70pt) -- (168pt,70pt) -- (168pt,52pt) -- (158pt,-10pt) -- cycle;
    \draw[darkgray] (153pt,67pt) -- (237pt,67pt) -- (237pt,54pt) -- (206pt,38pt) -- (194pt,38pt) -- (153pt,54pt) -- cycle;
    \draw[lightgray] (237pt,70pt) -- (236pt,30pt) -- (230pt,-10pt) -- (212pt,-10pt) -- (212pt,8pt) -- (222pt,70pt) -- cycle;
    \draw[lightgray] (227pt,-7pt) -- (215pt,-7pt) -- (174pt,10pt) -- (174pt,22pt) -- (185pt,22pt) -- (227pt,6pt) -- cycle;
    \draw[lightgray] (240pt,23pt) -- (225pt,23pt) -- (193pt,40pt) -- (193pt,52pt) -- (204pt,52pt) -- (240pt,34pt) -- cycle;
    \draw[lightgray] (172pt,5pt) -- (172pt,21pt) -- (193pt,55pt) -- (207pt,55pt) -- (207pt,37pt) -- (188pt,5pt) -- cycle;

	\end{scope}

    \begin{scope}[very thick, dashed, rounded corners = 5pt]
    \draw (133pt,-13pt) -- (133pt,77pt) -- (210pt,77pt) -- (210pt,-13pt) -- cycle;
    \draw (246pt,-17pt) -- (246pt,73pt) -- (170pt,73pt) -- (170pt,-17pt) -- cycle;
    \end{scope}

    \node[anchor=center] at (165pt,-22pt) {\large $V_1$};
    \node[anchor=center] at (215pt,-26pt) {\large $V_2$};

	\draw[black, fill=white] (160pt,60pt) circle (1.0ex);
	\draw[black, fill=white] (150pt,30pt) circle (1.0ex);
	\draw[black, fill=white] (150pt,0pt) circle (1.0ex);
	\draw[black, fill=white] (200pt,45pt) circle (1.0ex);
	\draw[black, fill=white] (180pt,15pt) circle (1.0ex);
	\draw[black, fill=white] (230pt,60pt) circle (1.0ex);
	\draw[black, fill=white] (230pt,30pt) circle (1.0ex);
	\draw[black, fill=white] (220pt,0pt) circle (1.0ex);

\end{tikzpicture}}
     \vspace{-16pt}
    \caption{Illustration of hypergraph bipartitioning into two sets $V_1$ and $V_2$, without replication (left) and with replication (right). Cut hyperedges are shown with a darker grey tone.}
    \label{fig:hypergraph}
\end{figure}

\subsubsection*{With replication}

While we are not aware if the extension of graph partitioning with replication has been explicitly studied before, it yields a natural mathematical problem. Assume that $V_1$, ..., $V_P$ are now not required to be disjoint, but must still satisfy the balance constraint; this allows some nodes to be assigned to multiple partitions. For each (hyper-)edge, we want to find the minimal number of partitions that cover all its nodes, in order to again obtain an indicator of the communication costs.

Specifically, let us now redefine $\lambda_e$ as the minimal integer such that there is a subset of processors $\Pi \! \subseteq \! [P]$ such that $|\Pi| = \lambda_e$, and for all $v \! \in \! e$ there is a $p \! \in \! \Pi$ with $v \! \in \! V_p$. For example, if we have a hyperedge $e=(u, v, w)$ such that $u$ is in both $V_1$ and $V_2$, $v$ is in both $V_2$ and $V_3$, and $w$ is in both $V_3$ and $V_4$, then $\lambda_e=2$, because e.g. $V_1$ and $V_3$ together already cover the hyperedge. For the special case of simple graphs and edges $(u,v)$, we simply have $\lambda_e=1$ if $u$ and $v$ share a processor, and $\lambda_e=2$ otherwise. We again sum up $(\lambda_e-1)$ for all $e \! \in \! E$ to obtain the total cost of a partitioning.

This new definition of $\lambda_e$ generalizes the previous cost function to partitioning with replication, again capturing the required number of pairwise communications in many of the use cases. In our MoE hypergraphs, each expert now may be available on multiple devices; the new $(\lambda_e-1)$ describes the minimal number of devices where we need to send the token so that it reaches all experts in hyperedge $e$. In a horizontal hyperedge of the SpMV example (or any other setting where hyperedges describe an aggregation of outputs), each operation may be executed on multiple processors, so the number of communications required to collect the outputs to a single processor is again the new $(\lambda_e-1)$.

We note that a significant drawback of this cost function is that $\lambda_e$ is rather challenging to compute, as it actually forms an instance of the set cover problem. However, when $P$ is a small constant, this is not a problem in terms of complexity or practical viability.

\subsection{DAG scheduling}

Consider a general static computation with precedence constraints between the operations, which can be naturally captured as a DAG. It is easy to see that a single balance constraint per processor does not ensure parallelization here: we might be forced to execute many or even all the operations of one processor before another processor can start working. Some ideas have been explored to extend partitioning to this setting, like layer-wise partitioning the DAG, but these have strong limitations~\cite{hypergraphs_opdas}.

The only model to capture parallelization accurately in this case is scheduling, where we assign the nodes not only to processors, but also to time steps. In contrast to partitioning, the DAG scheduling problem offers a general model that can be applied to any static computation; however, it is also often more challenging to solve or analyze due to the extra time dimension.

DAG scheduling has been exhaustively studied in many models. Replication only becomes relevant in more realistic models that account for communication costs between processors, such as the communication delay model~\cite{commDdef1}, the BSP model~\cite{BSPintro}, or some custom models~\cite{SPD}. We analyze replication in the BSP model, since it is the most widely used among these for practical works. It also accounts for communicated data volume (unlike the communication delay model), and it offers a convenient schedule format for our algorithms. Even when not named explicitly, scheduling algorithms are often designed and analyzed in BSP-like models in several applications, such as sparse triangular system solving~\cite{zarebavani2022hdagg,sptrsv_opdas}.

\subsubsection*{Without replication}

In the BSP model, a schedule is split into larger batches of computations called \textit{supersteps}. Each superstep consists of two phases: first a computation phase where each processor executes operations, but no data movement happens, and then a communication phase where data is exchanged between processors, but no computation happens.

Let $S$ denote the number of supersteps we choose to have in our schedule. Formally, a BSP schedule for a DAG consists of:
\begin{itemize}[leftmargin=1.8em, topsep=2pt, itemsep=1.5pt]
 \item computation phases $V_{p,s} \! \subseteq \! V$ for all $p \! \in \! [P]$, $s \! \in \! [S]$, containing the set of nodes executed by processor $p$ in superstep $s$; 
 \item communication phases $\Gamma_{p,s} \! \subseteq \! V \! \times \! [P]$ for all $p \! \in \! [P]$ and $s \! \in \! [S]$, i.e.\ the set of values (outputs of specific nodes) to send from processor $p$ to some other processor $p'$ in superstep $s$.
\end{itemize}

Let us say that a node $v$ is \textit{already present} on processor $p$ in superstep $s$ if either $\exists \, s' \! \leq \! s$ such that $v \! \in \! V_{p,s'}$, or if $\exists \, s' \! < \! s$ and $\exists \, p' \! \in \! [P]\setminus \{ p\}$ such that $(v, p) \! \in \! \Gamma_{p',s'}$. A valid BSP schedule must satisfy the following properties for all $p \! \in \! [P]$ and $s \! \in \! [S]$:
\begin{itemize}[leftmargin=1.8em, topsep=2pt, itemsep=1.5pt]
 \item All inputs of the operations are available: for all $v \! \in \! V_{p,s}$, all the parents $u$ of $v$ in $G$ are already present on $p$ in superstep $s$.
 \item Any data we send is available: for all $(v,p') \! \in \! \Gamma_{p,s}$, $v$ is already present on $p$ in superstep $s$.
\end{itemize}
Furthermore, a valid schedule must ensure that all nodes are computed, i.e.\ for all $v \! \in \! V$, there exists a $p \! \in \! [P]$ and $s \! \in \! [S]$ such that $v \! \in \! V_{p,s}$. In scheduling without replication, we also require that each node is assigned only to one processor and superstep, i.e.\ the computation phases $V_{p,s}$ are all disjoint.

The cost of a BSP schedule is defined as follows. For each superstep $s \! \in \! [S]$, the cost of the compute phase is $\max_{p \in [P]} |V_{p,s}|$, or $\max_{p \in [P]} \sum_{v \in V_{p,s}} \omega(v)$ with node weights, since the computations on the different processors run in parallel. The communication cost is more complex, defined via a so-called $h$-relation and two global parameters $g$, $L$ of the BSP model: $g$ is the cost of sending a single unit of data (the inverse bandwidth), and $L$ is a fixed synchronization cost incurred by each superstep. Let us denote the values received by processor $p$ in superstep $s$ by $\Gamma'_{p,s} = \{ (v, p') \! \in \! V \! \times \! [P] \, | \, (v,p) \! \in \! \Gamma_{p',s}  \}$. The cost of a communication phase depends on the maximal amount of data sent or received by any processor, defined as
\[ L + g \cdot \max_{p \in [P]} \, \max \, ( \, |\Gamma_{p,s}| \, ,  \, |\Gamma'_{p,s}| ) \, .\]
In case of node weights for communication, the two terms in the max are replaced by $\sum_{(v,p') \in \Gamma_{p,s}} \mu(v)$ and $\sum_{(v,p') \in \Gamma'_{p,s}} \mu(v)$, respectively. The cost of a superstep is the sum of the compute and communication costs, and the total cost of a schedule is the sum of the costs of all supersteps. Unlike partitioning, this approach combines workload balance and communication costs into a single objective function. We refer to~\cite{BSP_DAG_opdas, BSPbook2} for more details on the BSP model.

Given an input computational DAG and parameters $P$, $g$ and $L$, our goal is to find a valid BSP schedule of minimal cost.

\begin{figure}[t]
    \centering
    \hspace{3pt}
     \resizebox{0.45\textwidth}{!}{\begin{tikzpicture}
	
    \begin{scope}[very thick, dashed, gray]
    \draw (0pt,0pt) rectangle (50pt,30pt);
    \draw (0pt,50pt) rectangle (50pt,80pt);
    \draw (80pt,0pt) rectangle (130pt,30pt);
    \draw (80pt,50pt) rectangle (130pt,80pt);
    \draw (160pt,0pt) rectangle (210pt,30pt);
    \draw (160pt,50pt) rectangle (210pt,80pt);
    \end{scope}

    \node[anchor=center] at (25pt,-8pt) {\small $V_{2,1}$};
    \node[anchor=center] at (105pt,-8pt) {\small $V_{2,2}$};
    \node[anchor=center] at (185pt,-8pt) {\small $V_{2,3}$};
    \node[anchor=center] at (25pt,87pt) {\small $V_{1,1}$};
    \node[anchor=center] at (105pt,87pt) {\small $V_{1,2}$};
    \node[anchor=center] at (185pt,87pt) {\small $V_{1,3}$};

    \begin{scope}[thick, arrows=-stealth]
    \draw (8pt,15pt) -- (21pt,8pt);
    \draw (8pt,15pt) -- (21pt,22pt);
    \draw (25pt,8pt) -- (39pt,13pt);
    \draw (25pt,22pt) -- (39pt,17pt);

    \draw (8pt,65pt) -- (21pt,65pt);
    \draw (25pt,65pt) -- (38pt,65pt);

    \draw (89pt,20pt) -- (102pt,20pt);
    \draw (89pt,19pt) -- (118.5pt,9pt);
    \draw (106pt,20pt) -- (119pt,13pt);

    \draw (95pt,65pt) -- (111pt,65pt);

    \draw (42pt,65pt) -- (91pt,65pt);
    \draw (42pt,65pt) -- (87pt,22pt);

    \draw (42pt,15pt) -- (93pt,62pt);
    \draw (25pt,23pt) -- (86pt,20pt);

    \draw (122pt,10pt) -- (166pt,7pt);
    \draw (170pt,7pt) -- (198pt,7pt);

    \draw (175pt,58pt) -- (191pt,58pt);
    \draw (175pt,72pt) -- (191pt,72pt);

    \draw[very thick, gray] (165pt,29pt) -- (178pt,16.5pt);
    \draw[densely dashed] (95pt,65pt) -- (200pt,11pt);

    \draw (122pt,10pt) -- (173pt,56pt);

    \draw (115pt,65pt) -- (172pt,59pt);
    \draw (115pt,65pt) -- (172pt,71pt);
    
    \end{scope}

    \draw[black, fill=white] (8pt,15pt) circle (0.9ex);
    \draw[black, fill=white] (25pt,8pt) circle (0.9ex);
    \draw[black, fill=white] (25pt,22pt) circle (0.9ex);
    \draw[black, fill=white] (42pt,15pt) circle (0.9ex);

    \draw[black, fill=white] (8pt,65pt) circle (0.9ex);
    \draw[black, fill=white] (25pt,65pt) circle (0.9ex);
    \draw[black, fill=white] (42pt,65pt) circle (0.9ex);

    \draw[black, fill=white] (89pt,20pt) circle (0.9ex);
    \draw[black, fill=white] (106pt,20pt) circle (0.9ex);
    \draw[black, fill=white] (122pt,10pt) circle (0.9ex);

    \draw[black, fill=lightgray] (95pt,65pt) circle (1.0ex);
    \node[anchor=center] at (95pt,65pt) {$v$};
    \draw[black, fill=white] (115pt,65pt) circle (0.9ex);

    \draw[black, fill=white] (170pt,7pt) circle (0.9ex);
    \draw[black, fill=lightgray] (180pt,14pt) circle (1.0ex);
    \node[anchor=center] at (180pt,14pt) {$v$};
    \draw[black, fill=white] (202pt,7pt) circle (0.9ex);

    \draw[black, fill=white] (175pt,58pt) circle (0.9ex);
    \draw[black, fill=white] (175pt,72pt) circle (0.9ex);
    \draw[black, fill=white] (195pt,58pt) circle (0.9ex);
    \draw[black, fill=white] (195pt,72pt) circle (0.9ex);

\end{tikzpicture}}
     \vspace{-6pt}
    \caption{BSP schedule of a DAG on $2$ processors and $3$ supersteps. Node $v$ is originally computed in $V_{1,2}$. Replicating $v$ in $V_{2,3}$ allows to remove the dashed communication. Note: both parents of $v$ are already present on processor $2$ by $V_{2,3}$.}
    \label{fig:bsp}
\end{figure}

\subsubsection*{With replication}

The BSP scheduling problem is straightforward to extend with replication: we simply remove the restriction that the $V_{p,s}$ pairs need to be disjoint. The rest of the problem definition and the cost function remains identical.

Note that even with replication, it is never beneficial to assign a node twice to the same processor in different supersteps. However, replicating a node on two different processors can indeed help: it might increase the cost of the corresponding computation phases, but it can also notably decrease the incurred communication costs, thus altogether providing a schedule with lower cost.

An example schedule with replication is illustrated in Figure~\ref{fig:bsp}.

\section{Theoretical results}

We first briefly analyze how introducing replication affects the partitioning and scheduling problems. It is not surprising that allowing replication can significantly decrease the optimal cost. Specifically, in partitioning, it can easily happen that the optimal cost without replication is $\Omega(n^2)$, but with replication it is $0$: a simple example graph for $P\!=\!2$ consists of two cliques of size $\frac{1+\varepsilon}{2} \! \cdot \! n$ that intersect in $\varepsilon \! \cdot \! n$ nodes. In scheduling, replication can reduce the cost by a factor $(P-\delta)$ for any $\delta>0$; this is essentially as high as possible, since the optimal cost is always between $\frac{n}{P}$ and $n$. The example here is a complete bipartite DAG with a set $U$ of source nodes and sets $U_1$, ..., $U_P$ of sink nodes, and carefully chosen parameters. Without replication, for a large constant $g$, the optimum only uses one processor at a cost of $n$. However, if we can replicate $U$, then the sets $U_i$ can be parallelized, and the optimum is essentially $\frac{n}{P}$ if $|U| \!<\!< \!|U_i|$. We discuss both of these examples in detail in Appendix~\ref{app:reduction}.

Below we focus on a more challenging question: the impact of replication on the complexity of the problems. Since it is usually straightforward to show NP-hardness for both partitioning and scheduling (in many model variants, with/without replication), we instead focus on the approximability of the optimal solution.

\subsection{Graph Partitioning}

For partitioning, we show that the optimal solution is NP-hard to approximate to any finite factor. This already holds for graph partitioning, and hence naturally carries over to hypergraphs, too.

This provides an interesting contrast to partitioning without replication. Without replication, for simple graphs, the optimum is known to be approximable in polynomial time to a $O(\log n)$ factor~\cite{Feige00, Krauthgamer06, Racke08}. For hypergraphs, the landscape is more complex. The best known inapproximability results depend on the complexity-theoretic conjectures we accept~\cite{hypergraphs_opdas}: $n^{1/\text{poly} \log \log n}$, assuming the Exponential Time Hypothesis~\cite{ETHhardness}; $n^{o(1)}$, assuming Gap-ETH~\cite{ETHhardness}; $n^{\delta}$ for some $\delta > 0$, assuming specific one-way functions~\cite{crypt}; and $n^{1/12-\delta}$ assuming the Hypergraph Dense vs.\ Random Conjecture~\cite{DenseVsSparseC}. However, all of these are significantly weaker than the inapproximability bound in our claim, which only relies on $P \! \neq \! NP$. For the case of hypergraph partitioning with $\varepsilon\!=\!0$, the optimum is also known to be approximable to a $\tilde{O}(\sqrt{n})$-factor~\cite{BisectionApprox}.

As such, our result indicates that introducing replication significantly increases the complexity of the partitioning problem.

\begin{theorem} \label{th:part_inapprox}
For any $\varepsilon \! \in \! (0,1)$, graph partitioning with replication is NP-hard to approximate to any finite factor.
\end{theorem}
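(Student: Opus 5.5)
The plan is to show that, for every fixed $\varepsilon\in(0,1)$ (and $P\ge 2$), it is NP-hard to decide whether the optimal cost of graph partitioning with replication is $0$. This gives inapproximability immediately. In a yes-instance the optimum is $0$, so any algorithm with a finite approximation factor $\alpha$ must return a solution of cost at most $\alpha\cdot 0=0$. In a no-instance every solution costs at least $1$. Such an algorithm would therefore decide the problem in polynomial time.

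\textbf{Characterizing cost $0$.} I would first note that cost $0$ means every edge $\{u,v\}$ lies inside some part $V_i$. For $P=2$, set $S=V_1\cap V_2$, $A=V_1\setminus V_2$ and $B=V_2\setminus V_1$. Cost $0$ then says exactly that $S$ is a vertex separator: there is no edge between $A$ and $B$. The balance constraint becomes $|A|+|S|\le \frac{1+\varepsilon}{2}n$ and $|B|+|S|\le\frac{1+\varepsilon}{2}n$. So zero-cost replicated bipartitioning is precisely a \emph{balanced vertex separator} problem, with separator size and side sizes coupled through $\varepsilon$.

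\textbf{Reduction for $P=2$.} I would reduce from the NP-hard vertex-separator problem: given $H$ and integers $k,m$, decide whether $V(H)=A\cup S\cup B$ with $|S|\le k$, no $A$--$B$ edges, and $|A|,|B|\le m$. (Equivalently, one can start from a bisection-type problem and replace edges by suitable gadgets.) The task is to tune the instance so that the single parameter $\varepsilon$ simulates the pair $(k,m)$. I would use two kinds of padding. First, add isolated vertices; they increase $n$, and hence the capacity $\frac{1+\varepsilon}{2}n$, and can be put on either side, so their number, together with the total count, fixes the effective $m$. Second, add two disjoint large cliques $K^{(1)},K^{(2)}$, each of size slightly below the capacity. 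Each clique is forced to lie inside a single part. The two cliques must lie in different parts, since otherwise that part overflows. This pins down how much free room each side has, and in turn how much of $H$ can be duplicated, that is, a bound on $|S|$. The aim is to choose the clique sizes and the number of isolated vertices, as polynomials in $|V(H)|$, so that the total counting yields exactly $|A|+|S|\le m$ and $|B|+|S|\le m$ with an effectively bounded $|S|$. This is possible for any fixed $\varepsilon$, since the free capacity per side is a fixed fraction that we can match by padding.

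\textbf{General $P$ and the main obstacle.} For $P>2$, I would add $P-2$ further disjoint cliques, each of size equal to the full capacity $C=\frac{1+\varepsilon}{P}n$, intended to occupy one processor each. The delicate point, which I expect to be the main obstacle, is proving that a large clique must sit inside a single part. With replication, $K_C$ can in principle be covered edge-wise by several smaller overlapping parts; for example, $K_3$ is covered by three pairs. I would handle this with a counting argument. Suppose a clique of size $c$ is not contained in one part, and $t$ parts cover its edges, each containing at most $c-1$ of its vertices. Then every vertex of the clique must appear in at least two of these parts, since each vertex meets every other vertex and no single part contains all of them. The clique therefore consumes at least $2c$ units of capacity. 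Choosing the clique sizes so that total capacity $P\cdot C=(1+\varepsilon)n$ is less than this double counting plus the mandatory load of the remaining vertices gives a contradiction, which forces each clique into one part. Once the cliques are pinned, each of the $P-2$ processors holding a full clique is unavailable to $H$, and we are back to the $P=2$ analysis above. I would then verify the yes/no correspondence in both directions and confirm that all sizes are polynomial.
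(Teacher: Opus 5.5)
You have the same skeleton as the paper: a cost-$0$ bipartition is exactly one where $V_1\cap V_2$ separates $V_1\setminus V_2$ from $V_2\setminus V_1$. So it suffices to show that deciding whether the optimum is $0$ is NP-hard for every fixed $\varepsilon$, and the real work is making the single parameter $\varepsilon$ simulate the source parameters.

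\textbf{The gap is in the padding step.} In any cost-$0$ bipartition, $|V_1|+|V_2|=n+|V_1\cap V_2|\le 2C$ with $C=\lfloor\frac{1+\varepsilon}{2}n\rfloor$. So the replication budget is $2C-n>\varepsilon n-2$.
\begin{itemize}
\item Isolated vertices and cliques not attached to $H$ never need to be replicated, since dropping their extra copies keeps the cost $0$. Hence this whole budget is available to $H$.
\item Each padding vertex adds about $1+\varepsilon$ to the total capacity $2C$ but consumes only $1$. It therefore \emph{raises} the budget by about $\varepsilon$, so your cliques loosen rather than tighten the bound on $|S|$.
\item Two disjoint cliques each of size close to $C$ would need about $(1+\varepsilon)n>n$ vertices, so they cannot be ``slightly below the capacity''.
\end{itemize}
The padded instance therefore enforces only $|S\cap V(H)|\le 2C-n$, plus the side constraints, and $2C-n>\varepsilon|V(H)|-2$. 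For source instances with $k$ below roughly $\varepsilon|V(H)|$, the bound $|S|\le k$ is simply lost. If you solve the matching equations, isolated vertices plus clique pairs realize $(k,m)$ only when, up to rounding, $k\ge\varepsilon(2m+k)$. Your sentence ``the free capacity per side is a fixed fraction that we can match by padding'' is exactly the step that fails. Nothing you cite guarantees that the source problem stays NP-hard in that restricted regime.

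\textbf{The missing idea is a gadget that consumes replication budget.} For example, take vertices adjacent to all other vertices. Once $V_1\setminus V_2$ and $V_2\setminus V_1$ are both nonempty, such a vertex must lie in $V_1\cap V_2$ in every cost-$0$ solution, and each one lowers $H$'s budget by about $1-\varepsilon$. With these when $k<\varepsilon(2m+k)$, your cliques otherwise, and isolated vertices throughout, every $(k,m)$ can be matched up to rounding, which still has to be handled.

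The paper does the equivalent padding on the source side. It first proves that balanced complete bipartite subgraph stays NP-hard with the required size pinned to $\lceil\delta n\rceil$, for any $\delta\in(0,\frac12)$. It pads in both directions: pairs adjacent only to each other grow $n$ without enlarging any biclique, and suitably attached pairs extend every biclique by one. It then takes the complement with $\delta=\frac{1-\varepsilon}{2}$. In the complement, the first kind of pair consists of nearly universal vertices, which is precisely the budget-consuming gadget your construction lacks.

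\textbf{On $P>2$.} The paper's proof treats only $P=2$. Your $P>2$ extension has its own problem. When $\varepsilon\ge\frac{2}{P-2}$ (e.g.\ $P=8$, $\varepsilon\ge\frac13$), the $P-2$ cliques of size $C=\lfloor\frac{1+\varepsilon}{P}n\rfloor$ already need about $n$ vertices. When they do fit, they inflate $n$ and hence the budget $PC-n\approx\varepsilon n$ left for $H$, which worsens the issue above. Your double-counting argument that a clique not contained in one part uses $2c$ capacity is fine.
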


\renewcommand*{\proofname}{Proof sketch}

\begin{proof}
The proof relies on showing that it is already NP-hard to decide whether there exists a partitioning with replication of cost $0$. Hence it is already NP-hard to distinguish between an optimum cost of $0$ or $1$; the theorem already follows from this.

The main idea of the proof is quite natural: if a partitioning has cost $0$, i.e.\ there are no edges between $V_1 \setminus V_2$ and $V_2 \setminus V_1$ at all, then $V_1 \cap V_2$ is a so-called vertex separator in the graph. As such, the problem of partitioning with cost $0$ allows a simple reduction from the balanced vertex separator problem, where the allowed size of the separator is set to $\varepsilon \! \cdot \! n$, that is, the maximal allowed size of the intersection $V_1 \cap V_2$ in the partitioning problem.

The concrete proof is somewhat more complex because previous works on balanced vertex separator typically consider the separator size as a problem input. Hence from these works, it only follows that balanced vertex separator is NP-hard for \textit{some} separator sizes. To prove our theorem for any $\varepsilon \! \in \! (0, 1)$, we instead require that it is NP-hard for a separator size of $\varepsilon \cdot n$ for any $\varepsilon \! \in \! (0, 1)$. As such, our proof needs some further technical steps to show that balanced vertex separator is NP-hard with separator size $\varepsilon \cdot n$ for any $\varepsilon \! \in \! (0, 1)$.
\end{proof}

The proof above relies on differentiating between an optimum of $0$ and $1$, but we can also extend this to optimum values of $0$ and $n^{2-\delta}$. This shows inapproximability to a large additive term. 

\begin{theorem} \label{th:additive}
For any $\varepsilon \! \in \! (0,1)$ and $\delta \!> \!0$, graph partitioning with replication is NP-hard to approximate to an $n^{2-\delta}$ additive term.
\end{theorem}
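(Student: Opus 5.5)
We will prove Theorem~\ref{th:additive} with a gap-amplifying blow-up of the reduction behind Theorem~\ref{th:part_inapprox}. As shown there, for every $\varepsilon \in (0,1)$ it is NP-hard to decide whether a graph $H$ on $m$ nodes has a partitioning with replication of cost $0$. Take such an instance $H$, fix a large integer $k$, and replace every node $v$ of $H$ by a clique $K_v$ on $k$ nodes. For every edge $(u,v)$ of $H$, add all $k^2$ edges between $K_u$ and $K_v$. Call the result $G$; it has $n = km$ nodes and is polynomial in size for polynomial $k$. We will choose $k = m^{c}$ with $c$ a large constant depending on $\delta$. Then $m = n^{1/(c+1)}$, and $k^2 = n^{2c/(c+1)} \geq n^{2-\delta}$ once $c \geq 2/\delta$.

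\emph{Completeness.} Suppose $H$ has a zero-cost partitioning with replication $V_1,\dots,V_P$. In $G$, put all copies in $K_v$ into exactly the parts that contain $v$. Each part then has size $k|V_i| \leq k\frac{1+\varepsilon}{P}m = \frac{1+\varepsilon}{P}n$, so the balance constraint holds. For every edge of $G$ (inside a clique, or between adjacent cliques), its endpoints share a processor, so the cost is $0$.

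\emph{Soundness.} Suppose $H$ has no zero-cost balanced partitioning. We want every balanced partitioning of $G$ to cost at least roughly $k^2/\mathrm{poly}(m)$. Given a partitioning of $G$, round it back to $H$: put $v$ into $V_i$ if $K_v$ has many copies in $V_i$. The natural threshold is at least $k/P$ copies, which guarantees that every $v$ is placed somewhere.

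This rounding is the main obstacle, because it can violate the balance constraint of $H$. A fractional share of $1/P$ per part may translate into up to $P$ times as many nodes of $H$. Two fixes are possible:
\begin{enumerate}
\item Strengthen Theorem~\ref{th:part_inapprox} to a gap version with slack in balance: hardness of distinguishing a balanced zero-cost partitioning with imbalance $\varepsilon$ from no zero-cost partitioning with imbalance $\varepsilon'$, for some $\varepsilon' > \varepsilon$. The vertex-separator reduction in its proof should allow this, since the separator-size hardness was proved for every $\varepsilon$.
\item Choose the rounding more carefully, for instance by assigning $v$ to the parts holding the most copies of $K_v$, and bound the resulting total part size using the size bound on $G$'s parts. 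This is where the technical work will lie.
\end{enumerate}

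Once a balanced rounding is available, it has positive cost in $H$. So there is an edge $(u,v)$ of $H$ whose rounded parts are disjoint. Consider the copies of $K_u$ and $K_v$ that lie only in their low-multiplicity parts. The rounding threshold ensures that a constant fraction (depending on $P$) of the copies of $u$ and of $v$ share no part. Each such pair is an edge of $G$ paying cost $1$, which gives a cost of $\Omega(k^2/P^2) \geq n^{2-\delta}$ for suitable $c$.

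Altogether, it is NP-hard to distinguish optimum $0$ from optimum at least $n^{2-\delta}$, which rules out any polynomial-time approximation within an additive $n^{2-\delta}$ term.
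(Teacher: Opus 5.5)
\textbf{There is a genuine gap: the soundness direction is not proved.} Your clique blow-up and your completeness argument are exactly the paper's. But soundness is the whole technical content of the theorem, and you name the obstacle, list two possible fixes, and carry out neither. Neither fix works as stated.

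\emph{The threshold rounding breaks balance.} The rule $v\in V_i \Leftrightarrow |K_v\cap V_i|\ge k/P$ really does violate the constraint. Already for $P=2$, a zero-cost partition of $G$ may contain cliques with half their copies in $V_1\cap V_2$ and half in $V_2\setminus V_1$. Each such clique is rounded into both parts, so a rounded part can exceed $\frac{1+\varepsilon}{2}m$ by $\Theta(m)$.

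\emph{Fix~1 asks for more than Theorem~\ref{th:part_inapprox} gives.} Because of this $\Theta(m)$ overshoot, you would need NP-hardness of distinguishing a zero-cost partition with imbalance $\varepsilon$ from the absence of one with imbalance $\varepsilon+\Omega(1)$. That is a constant-gap inapproximability statement for balanced vertex separator (equivalently, balanced biclique in the complement). Theorem~\ref{th:part_inapprox} is an exact padding reduction for each fixed $\varepsilon$. Hardness for every fixed $\varepsilon$ says nothing about separating two different values of $\varepsilon$ on the same instance. Fix~2 is not specified at all.

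\emph{Integrality is not addressed.} The capacity in $G$ is $\lfloor\frac{1+\varepsilon}{2}km\rfloor$, which can exceed $k\lfloor\frac{1+\varepsilon}{2}m\rfloor$ by up to $k-1$. So a cheap partition of $G$ may put a few copies of one additional clique on an exclusive side, and its rounding is then not a feasible partition of $H$. The paper handles this by reducing from some $\varepsilon'<\varepsilon$ and adding universal nodes, which can be swapped into $V_1\cap V_2$ at no cost. This makes both exclusive sides hold exactly an integer number of cliques. Note also that Theorem~\ref{th:part_inapprox} is proved only for $P=2$, so your general-$P$ setup rests on a cost-$0$ hardness that is not established.

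\textbf{What the missing step looks like.} For $P=2$ you need to turn any partition of $G$ of cost $o(k^2)$ into a zero-cost partition of $H$.
\begin{itemize}
\item \emph{The paper's route.} After the padding normalization, it uses that all nodes of a clique are twins. Copies can be swapped between two cliques split across the same pair of divisions without increasing the cost. This continues until at most one clique is split between any two of $V_1\cap V_2$, $V_1\setminus V_2$, $V_2\setminus V_1$. A short case analysis then shows that any nonzero cost is $\Omega(k^2)$.
\item \emph{A rounding route that would work.} Threshold on the exclusive sides rather than on $|K_v\cap V_i|$. Let $a_v=|K_v\cap(V_1\setminus V_2)|$ and $b_v=|K_v\cap(V_2\setminus V_1)|$. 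If the total cost is below $\theta^2k^2$, the intra-clique cost $a_vb_v$ rules out $a_v\ge\theta k$ and $b_v\ge\theta k$ holding together. Put $v$ only into $V_1$ if $a_v\ge\theta k$, only into $V_2$ if $b_v\ge\theta k$, and into both otherwise. For $P=2$, balance is equivalent to lower bounds on the two exclusive sides. Averaging recovers these bounds exactly once the padding makes the capacities integer multiples of $k$ and $\theta<1/(m+1)$. An edge of $H$ between the two exclusive sides would cost $a_ub_v\ge\theta^2k^2$ in $G$. This gives a lower bound of $k^2/(m+1)^2$.
\end{itemize}

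\textbf{A smaller error in your final counting step.} The copies to pair are those lying entirely inside their own rounded parts, not those in low-multiplicity parts. Copies of $u$ and $v$ in low-multiplicity parts can share a part that is low for both.
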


\begin{proof}
As a reduction from the previous theorem, for any graph $G$ on $N$ nodes, we create a new graph $G'$ on $n=N^{1/\delta}$ nodes by replacing each node in $G$ by a clique of size $n^{1-\delta}$. The construction remains polynomial for any fixed $\delta$. Original edges are replaced by complete bipartite connections between the cliques. We show that this expanded graph $G'$ fulfills two properties:
\begin{itemize}[leftmargin=1.8em, topsep=2pt, itemsep=1.5pt]
 \item The optimum in $G$ is 0 if and only if the optimum in $G'$ is $0$.
 \item If the optimum in $G'$ is not $0$, then it is at least $n^{2-\delta}$.
\end{itemize}
This shows that it is NP-hard to differentiate between an optimum of $0$ and $n^{2-\delta}$, hence the theorem follows. The technical part is to prove the second point above: we use several steps to transform any partitioning into an easier-to-handle solution, and then analyze several cases for the lower bound. We also need some additions to the construction to handle the inconvenient fact that we may have $\lfloor \frac{1+\varepsilon}{2} \! \cdot \! N \! \cdot \! n^{1-\delta} \rfloor \neq \lfloor \frac{1+\varepsilon}{2} \! \cdot \! N \rfloor \! \cdot \! n^{1-\delta}$, so the clique-expansion of an original partition may not fit into the constraint bounds anymore.
\end{proof}

\subsection{DAG scheduling}

Complexity questions are often more challenging to study for scheduling problems where computation and communication costs are combined into the same objective. For the simplest case of free communication, it is not even known whether scheduling for a constant $P \geq 3$ is NP-hard. For more realistic models, NP-hardness has been long known, and there are also results on the classes of DAGs for which scheduling becomes NP-hard. For BSP in particular, it is known that scheduling is polynomially solvable in chain DAGs (where all in- and outdegrees are at most $1$), but already NP-hard in in-trees (where out-degrees are at most $1$) and $2$-layer DAGs (where any directed path has length at most $1$)~\cite{BSP_DAG_opdas}. One can observe that these properties remains unchanged even with replication.

\begin{lemma} \label{lem:nphard}
For $P \in O(1)$, BSP scheduling with replication is solvable in polynomial time on chain DAGs, but NP-hard on in-trees and $2$-layer DAGs.
\end{lemma}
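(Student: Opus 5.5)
The plan is to reuse the results of~\cite{BSP_DAG_opdas} for the non-replicating problem and argue in each case that replication changes nothing essential.

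\textbf{Chain DAGs.} A chain DAG is a disjoint union of directed paths. I first show that for chains there is always an optimal schedule with replication that uses no replication at all. Take any replicating schedule and, for each node $v$, keep only one copy of $v$ in a carefully chosen computation phase. Removing a copy never increases the compute cost of a superstep. The delicate part is communication, since deleting a copy could make some send $(v,p')\in\Gamma_{p,s}$ or some child's input unavailable.

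The hard case is when a path is split among processors, with a node $v$ computed on several of them and its child $u$ using the local copy. Here I would keep, along each path, the copy that is ``needed latest'' and argue by an exchange argument. Consider the path segment $v_1,\dots,v_k$ executed redundantly on processors $p$ and $q$. Each node on a chain has at most one child, so only one processor ever needs the output of $v_i$ for the continuation. The copies on the other processor serve only to supply data to outgoing communication. Those sends can be rerouted from the kept copy at the same superstep, or removed if they were only used to feed redundant copies. Then the send/receive volume of every processor does not increase. Having reduced to the non-replicating case, I invoke the polynomial algorithm of~\cite{BSP_DAG_opdas} for $P\in O(1)$.

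I expect this exchange argument to be the main obstacle. A safer fallback is to rerun the dynamic program from~\cite{BSP_DAG_opdas} directly, with states allowing a node to be present on a subset of processors. Since $P\in O(1)$, the number of subsets is $2^P=O(1)$, so the running time stays polynomial.

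\textbf{In-trees and $2$-layer DAGs.} I look at the NP-hardness reductions in~\cite{BSP_DAG_opdas} and check that their gap is preserved when replication is allowed.

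For $2$-layer DAGs, the natural tool is the following observation: replicating a source node $u$ on a processor costs one unit of compute. Sending $u$ instead costs $g$ units of communication, plus possible synchronization. So I would choose the reduction's parameters so that both options are accounted for in the yes/no threshold. The cleaner route is to pick the reduction's instances so that every source has out-degree $1$ into each relevant group, or so that sources are weighted heavily. In either case a replicated copy is never cheaper than a move, and any schedule with replication converts into one without replication of no larger cost. Replicating a sink is useless, as it has no children.

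For in-trees, every node has at most one child. Under the same conversion as for chains, a node only ever needs to be present on the single processor that computes its child, and extra copies can be pruned. So the optimum with replication equals the optimum without replication, and hardness transfers verbatim. The pruning must handle the fact that a node's copy might feed communication of its ancestors' data. Since communication only forwards outputs already present, I would redirect such sends to originate from the kept copy.
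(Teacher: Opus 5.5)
Your treatment of chains and in-trees matches the paper's. With out-degree at most $1$, replication never helps, so the optimal costs with and without replication coincide, and both the polynomial algorithm and the NP-hardness transfer. One step needs repair, though. Rerouting a send so that it originates from the kept copy can raise that processor's send volume, and hence the $h$-relation, so your claim that ``volumes do not increase'' is not justified. You can avoid rerouting by working in reverse topological order. The unique child $u$ of $v$ then sits on a single processor $p$. Keep only the copy of $v$ through which $v$ first becomes present on $p$, together with the sends realizing this, and delete all other copies and sends of $v$. Nothing else uses $v$, so no compute or communication volume grows.

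The $2$-layer case has a genuine gap. Your ``cleaner route'' rests on the claim that every replicating schedule converts into a non-replicating one of no larger cost. You give no argument for this, and it is false for $2$-layer DAGs in general. The paper's complete bipartite example (sources $U$ joined to all of $U_1,\dots,U_P$) is a $2$-layer DAG where replication lowers the optimum by a factor close to $P$.

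The original reduction is built from exactly such complete bipartite $2$-block gadgets. Once a sink block is split, each source of the block can now be replicated instead of sent. This may be far cheaper than $g$ per node, and it can avoid a communication phase entirely, so the original lower bound on the cost of splitting no longer holds as stated.

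Your proposed remedies do not close this gap. Heavy node weights leave the unweighted setting in which the hardness results are stated. Limiting sources to out-degree $1$ per group destroys the complete bipartite connections that force each sink block onto one processor, so you would need a new reduction, and none is given.

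Your first instinct, to account for both options in the threshold, is the right one, and it is what the paper does:
\begin{itemize}
\item enlarge the first-level blocks so that replicating a whole block is also prohibitively expensive;
\item take $g>1$, so that the first-level nodes needed on both processors (the $v_{e'}$ and $\hat{v}_i$ of the original construction) are replicated rather than sent;
\item lower the threshold from $C_0=\frac{n}{2}+(|E'|-\binom{k}{2})\cdot g$ to $C_0=\frac{n}{2}+(|E'|-\binom{k}{2})$.
\end{itemize}
The last change is needed because with $g>1$ the old bound would let a no-instance afford more than $|E'|-\binom{k}{2}$ replications. With these changes, the original argument carries over with communication steps replaced by replications.
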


The proof is deferred to Appendix~\ref{app:nphard}. On a high level, for chain DAGs and in-trees, the out-degree of at most $1$ ensures that replication is never beneficial, hence the original proofs carry over to our setting. For the case of $2$-layer DAGs, some further observations are also required to adapt the original construction~\cite{BSP_DAG_opdas}.

The approximability of the optimal schedule is a more challenging question. For $P \! \in \! O(1)$, the only hardness result we are aware of is a $(1+\delta)$-factor inapproximability in the BSP model, for a small $\delta>0$~\cite{BSP_DAG_opdas}. Intuitively, proving inapproximability is difficult because any schedule has an inherent compute cost of at least $\frac{n}{P}$, and the cost difference between solutions is often small compared to this. To overcome a similar issue in pebbling problems, recent work considered an alternative cost metric which directly subtracts this unavoidable cost from the total~\cite{MPP_opdas}. This idea generalizes well to scheduling with replication, since it naturally captures any extra cost incurred by either communication steps or replications.

\begin{definition}
The \emph{surplus cost} of a BSP schedule (with/without replication) is obtained by subtracting $\frac{n}{P}$ from its original BSP cost.
\end{definition}

To our knowledge, surplus cost has not been studied before in BSP or other models. We show that the optimal surplus cost cannot be approximated to a reasonable factor, either with or without replication. This result is a bit of an outlier in the paper, as it does not highlight a difference between the replicating and non-replicating case. Nonetheless, it provides valuable insight into DAG scheduling problems, and we hope that it motivates the study of surplus-like metrics in the future.

\begin{theorem} \label{thm:surplus}
Assuming the exponential time hypothesis, for $P \! = \! 2$ and some $\delta>0$, the surplus cost in BSP scheduling (with or without replication) is NP-hard to approximate to an $n^{1/(\log\log n)^{\delta}}$ factor.
\end{theorem}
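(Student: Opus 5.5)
We would prove Theorem~\ref{thm:surplus} by a gap-preserving reduction from balanced hypergraph bisection (the case $P=2$, $\varepsilon=0$ of the partitioning problem in Section~\ref{sec:def_part}). For this problem, the ETH-based hardness cited above~\cite{ETHhardness} says it is hard to approximate the optimum within $N^{1/(\log\log N)^{c}}$ on $N$-node instances. The key point is that surplus cost strips away the unavoidable $\frac{n}{P}$ term. So if we build a DAG whose only ``wasted'' cost comes from communication, or from replicating small source nodes, then the surplus should equal the cut cost of a corresponding bisection up to constant factors. We also need the DAG size $n$ to stay polynomial in $N$, so that the $N^{1/(\log\log N)^{c}}$ gap becomes an $n^{1/(\log\log n)^{\delta}}$ gap for some $\delta>0$.

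The construction would be a $2$-layer DAG. Each hyperedge $e$ becomes a source node $u_e$. Each hypergraph node $v$ becomes a block of $w$ sink nodes, each with an edge from $u_e$ for every $e\ni v$. We choose $w$ polynomially large, exceeding the maximum possible cut and the number of sources. We take $g=1$ and $L$ large, so that an optimal schedule uses at most two supersteps: the sources first, then the sinks. In that shape, any sink imbalance of even one block costs $w$ in surplus. Hence a schedule of small surplus induces an exact bisection of the hypergraph nodes, namely the processor on which each block of sinks lives; we first argue that splitting a block is never better. A source $u_e$ is needed on both processors exactly when $\lambda_e=2$. In that case the schedule pays either one communication or one replicated computation, so in both the replicating and non-replicating variants the surplus is $\Theta(\text{cut})$ plus whatever the source layer itself wastes.

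The steps, in order, are as follows.
\begin{enumerate}
\item Normalize optimal schedules: few supersteps, whole blocks kept together, sinks exactly balanced.
\item Upper bound: from a bisection of cut $C$, build a schedule of surplus $O(C)+R$.
\item Lower bound: any schedule of surplus $X$ yields a bisection of cut $O(X)$.
\item Verify that $n=\mathrm{poly}(N)$ and transfer the gap through $\log\log n=\Theta(\log\log N)$.
\end{enumerate}
Here $R$ denotes the residual waste from the source layer.

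The main obstacle is the residual $R$: the source layer must contribute no surplus independent of the cut. Otherwise an additive offset destroys the multiplicative gap, especially near an optimum that is small. Computing the sources in a first superstep costs roughly $\lceil m/2\rceil$ there, but sinks cannot run concurrently. An imbalance, or a synchronization cost $L$ per superstep, therefore adds surplus that does not depend on the cut. The first fix we would try is to pad the instance with independent filler nodes that absorb idle time in every superstep. Concretely, each processor gets isolated chains that can be computed in any superstep, so that all computation phases are perfectly balanced regardless of the bisection. Then $L$ and the source work cancel against the $\frac{n}{P}$ subtraction, up to an additive $O(L)$. Since $L$ is a fixed constant number of supersteps times a parameter we may set small (for instance $L=0$ or $1$), we then reduce from instances whose optimum cut is at least $1$. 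This way a constant additive term only costs a constant factor. If $L$ cannot be made negligible, we would instead scale the cut by replacing each hyperedge with $K$ parallel copies of its source, which multiplies communication by $K$ while the padding still hides the extra source work.
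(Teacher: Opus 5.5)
Your plan is sound in outline, but it takes a different route from the paper.

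\textbf{The paper's route.} The paper reduces directly from smallest $k$-edge subgraph:
\begin{itemize}
\item each edge of $G_0$ becomes a complete bipartite block $(U^{(1)}_{e_0},U^{(2)}_{e_0})$;
\item two oversized anchor blocks force exactly $k$ edge gadgets onto processor $2$;
\item each node source has an edge into $U_1^{(2)}$, so it is always needed on processor $1$;
\item isolated nodes absorb all imbalance.
\end{itemize}
This gives exactly $\text{OPT}_{surp}=\lceil \text{OPT}_{SkES}/2\rceil$: one superstep with replication, and communication with $g=1$, $L=0$ in the non-replicating case.

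\textbf{Your route.} You reduce from exact hypergraph bisection, with hyperedges as sources and nodes as sink blocks. This is lighter and modular. It shows that surplus scheduling is at least as hard as bisection, and it needs no anchor blocks because the bisection balance plays their role. The price is that you import the bisection hardness, which the paper only quotes via \cite{hypergraphs_opdas}. You need it at exactly $\varepsilon=0$. That hardness itself rests on the same result of \cite{ETHhardness}, so the composed reduction ends up close to the paper's.

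\textbf{Split blocks.} Two of your steps need a different justification from the one you hint at. First, splitting the block of $v$ only forces the $\deg(v)$ sources $u_e$, $e\ni v$, onto both processors. So ``splitting is never better'' cannot come from a penalty, as it does in the paper, where a split costs $\Omega(M_1)$. It comes from rounding instead.
\begin{itemize}
\item Let $A$, $B$, $S$ be the nodes whose blocks lie wholly on processor $1$, wholly on processor $2$, or are split.
\item Processor $1$'s work is at most $n/2$ plus the surplus, which bounds $|A|w$; the same holds for $|B|w$.
\item This gives $|A|,|B|\le N/2$, provided $w$ exceeds the filler size plus $2m$ (not just the cut and $m$). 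It also requires first replacing any schedule of surplus above $m$ by the one built from an arbitrary bisection.
\item Distribute $S$ to obtain an exact bisection.
\item Every hyperedge this bisection cuts either touches $S$ or was already cut between $A$ and $B$, so its source was present on both processors.
\end{itemize}

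\textbf{Supersteps.} Second, you neither need nor, with $L=0$, can guarantee few supersteps. Take any schedule with $r$ replicated computations and $c$ communications. Averaging over the two processors gives $\sum_s \max_p|V_{p,s}|\ge (n+r)/2$ and $\sum_s \max_p|\Gamma_{p,s}|\ge c/2$. Hence the surplus is at least $(r+gc)/2$, which for $g=1$ is at least half the number of sources present on both processors.

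\textbf{Additive constant.} Use $L=0$, about $m$ isolated filler nodes, and assume an optimum cut of at least $1$. Zero-cut bisections can be detected separately by a subset-sum over component sizes. Then the additive rounding term costs only a constant factor, and the gap transfers through $n=\text{poly}(N)$ as in the paper.
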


\begin{proof}
The main idea is similar to a $2$-layer DAG construction that is used to show NP-hardness in~\cite{BSP_DAG_opdas}. By changing some gadgets and the analysis, we obtain a reduction from the smallest $k$-edge subgraph problem, which is inapproximable to a $n^{1/(\log\log n)^{\delta}}$ factor under the same assumption~\cite{ETHhardness}.

On a high level, the construction contains a block gadget for each edge of the original graph (the input of smallest $k$-edge subgraph). With extra gadgets and careful adjustments, we can ensure that any reasonable schedule assigns exactly $k$ such gadgets to processor $1$, and the rest to processor $2$. We then create a gadget for each node of the original graph, which incurs a communication or replication step if and only if one of the incident edge gadgets is assigned to processor $1$. The rest of the construction allows for perfect work balance and requires no communication, so the surplus cost corresponds exactly to the communication or replication steps incurred by the node gadgets that are incident to one of the $k$ chosen edges. This allows a reduction to smallest $k$-edge subgraph. The cost of communication/replication is actually split between the two processors, so it might be a factor $2$ off from the original cost, but this has no affect on approximability to a superconstant factor. The proof details are in Appendix~\ref{app:spes}.
\end{proof}

\section{Replication in partitioning} \label{sec:part}

We now turn to concrete algorithmic approaches to study the impact of replication empirically. For hypergraph partitioning, we consider a standard representation of the problem as an ILP, and we analyze two different ways to extend this with replication. 

In the formulations, $v \! \in \! V$, $e \! \in \!E$ and $p \! \in \! [P]$ will always denote a node, hyperedge and processor, respectively. For simplicity, we write $\sum_v$, $\sum_e$ and $\sum_p$ instead of $\sum_{v \in V}$, $\sum_{e \in E}$ and $\sum_{p \in [P]}$. 

\subsection{ILP formulation of hypergraph partitioning}

Hypergraph partitioning is known to allow a rather straightforward ILP formulation~\cite{jenneskens22}; we briefly outline it here for completeness.

For each node $v$ and processor $p$, one can use a binary variable $x_{v,p}$ to indicate whether $v \! \in \! V_p$. For each hyperedge $e$, a binary variable $y_{e,p}$ indicates whether there exists any $v \! \in \! e$ such that $v \! \in \! V_p$. We use $\sum_{p} x_{v,p} = 1$ for all $v$ to ensure that each vertex is assigned. For all hyperedges $e$, all $v \! \in \! e$ and all processors $p$, the constraint $y_{e,p} \geq x_{v,p}$ ensures that the variables $y_{e,p}$ fulfill their role. The balance constraint can be expressed as $\sum_v x_{v,p} \leq \frac{1+\varepsilon}{P} \cdot n$. As the objective function, we aim to minimize $\sum_e \, (\sum_p y_{e,p} - 1)$, which describes the $(\lambda-1)$-metric.

\subsection{Introducing replication}

We discuss two ways to extend this ILP formulation with replication.

\subsubsection{ILP/D -- Duplication} \label{sec:ilp_dupl}

We first consider a limited form of replication: we only allow any specific node to be replicated on at most $2$ processors. This is a restriction from the modeling side, but it allows for a simpler ILP representation.

We begin by introducing a binary variable $z_v$ for each node $v$ to indicate whether $v$ is replicated on two processors. We can replace the original constraint $\sum_p x_{v,p} = 1$ by $\sum_p x_{v,p} = z_v + 1$; this implicitly ensures that $\sum_p x_{v,p} \in \{ 1, 2\}$.

The variables $y_{e,p}$ now indicate whether the solution uses processor $p$ to cover some part of hyperedge $e$. E.g.\ if $e=\{ v_1, v_2\}$, with $v_1 \in V_1 \cap V_2$ and $v_2 \in V_1 \setminus V_2$, the desired values are $y_{e,1}=1$ but $y_{e,2}=0$, despite the fact that $e$ intersects $V_2$. 
We ensure this with the following constraints. If a vertex is not duplicated, i.e.\ $z_v = 0$, then we need $y_{e,p} \geq x_{v,p}$ as before. We can ensure this with the constraint $y_{e,p} \geq x_{v,p}-z_v$, which has no effect if $z_v = 1$. If a vertex is duplicated, i.e., $z_v = 1$, then for each hyperedge $e$ with $v \in e$ and each pair of processors $p_1$ and $p_2$, we need to ensure that $x_{v,p_1}=x_{v,p_2}=1$ implies either $y_{e,p_1}=1$ or $y_{e,p_2}=1$. This can be formulated as $y_{e,p_1} + y_{e,p_2} \geq x_{v,p_1} + x_{v, p_2} -1$.

This results in $P^2 \! \cdot \! \kappa$ constraints of this type (recall that $\kappa$ is the number of pins), which is a notable increase from the original $P \! \cdot \! \kappa$. However, the magnitude of the number of variables remains unchanged, so for a small $P$, this may still be viable.

\subsubsection{ILP/R -- Unlimited replication} \label{sec:ilp_repl}

An alternative approach is to introduce a binary variable $z_{(v,e),p}$ for each pin $(v,e)$ in the hypergraph, and each processor $p$. This indicates whether the replica of node $v$ on processor $p$ is used to cover $v$ in hyperedge $e$.

The constraint $\sum_{p} z_{(v,e),p} = 1$ for all pins $(v,e)$ ensures that each pin is covered in each hyperedge. For each pin $(v,e)$ and processor $p$, the constraint $x_{v,p} \geq z_{(v,e),p}$ ensures that a node is only used if it is indeed assigned to a processor. Finally, for each pin $(v,e)$ and processor $p$, we add $y_{e,p} \geq z_{(v,e),p}$: if the pin is covered by $p$, then $p$ is one of the processors used for hyperedge $e$. We can again remove the original constraint $\sum_{p} x_{v,p} = 1$.

This provides a more general ILP, but the number of variables now also scales with the number of pins $\kappa$. As such, this ILP has more variables both in the primal and dual problem, possibly making it more challenging for an ILP solver.

\section{Replication in scheduling}

Intuitively, the scheduling problem has a higher degree of freedom than partitioning, and hence it is in many ways more challenging to tackle. Due to this, algorithm design for scheduling often focuses on sophisticated heuristics instead of exact solutions.

\subsection{Scheduling baselines}

As a baseline, we consider the recent work of Papp et al.~\cite{BSP_algos_opdas}, which develops a framework to optimize BSP schedules for DAGs. In particular, we use the following two-step algorithm from~\cite{BSP_algos_opdas}:
\begin{itemize}[leftmargin=1.8em, topsep=2pt, itemsep=1.5pt]
\item first a good initial schedule is developed using a well-balanced heuristic (named BSPg), which adapts the classical list scheduling approach to the BSP model;
\item this is followed by a hill climbing local search, which further improves the schedule by executing small modifications that decrease the cost.
\end{itemize}
The framework of~\cite{BSP_algos_opdas} also considers more advanced methods to further improve small parts of the schedule, but these provide very limited gain at the cost of a very high running time.

We point out that our goal is not to compete with this baseline scheduler, but to take the non-replicating BSP schedule it produces, introduce replication steps into this schedule, and analyze how much this can decrease the cost of the baseline schedule.

\subsection{Introducing replication}

We consider two ways to introduce replication into a BSP schedule.

\subsubsection{ILP-based method}

One natural idea is to once again embed replication into an ILP formulation of the scheduling problem. The ILP formulation of BSP scheduling is rather complex, but it has already been analyzed~\cite{BSP_DAG_opdas, BSP_algos_opdas}. Replication only needs a trivial change in this formulation: relaxing the constraint that all nodes must be computed exactly once.

However, solving scheduling problems via ILPs is much more challenging than in case of partitioning: the scheduling ILP has a much higher number of variables due to the extra dimension of time steps. Even with modern ILP solvers, this approach is only viable for very small graphs in practice.

\subsubsection{The `basic' heuristic}

We also consider some heuristic method to also tackle the scheduling problem on much larger DAGs. These algorithms begin from a BSP schedule without replication, and in a local-search-like fashion, they identify improvement opportunities where replacing some communication step(s) in the schedule by replication decreases the total cost.

The basic heuristic only considers the simplest improvement steps: replacing a single communication step with replication. That is, for any communication step of sending node $v$ from processor $p_1$ to $p_2$, we check whether the total cost decreases if we remove this step, and instead replicate $v$ on processor $p_2$ in an appropriate superstep. The first possible superstep to replicate $v$ on $p_2$ is the first computation phase where all of $v$'s parents are already present on $p_2$; the last possible superstep is where $v$ is first required for either a computation or communication on $p_2$. We select the superstep from this interval where adding the computation of $v$ on $p_2$ increases the compute cost by the smallest amount. If this cost increase is smaller than the gain from removing the communication step, then we execute this modification, decreasing the total BSP cost.

Note that executing such a step might also enable further improvement steps that were not beneficial before: the gain from removing other communications may become larger, or the cost increase from an extra computation may become smaller. The basic heuristic simply keeps searching for and executing these improvement steps as long as it is possible.

\subsubsection{The `advanced' heuristic}

We also implement a more advanced heuristic that considers several ways to execute larger modifications, replacing multiple communication steps by replication in a single step. This allows to further improve the schedule even when the basic heuristic is stuck in a local minimum. Our algorithm uses three different approaches to further utilize replication in the schedule (also shown in Figure~\ref{fig:heuristic}).

\begin{figure*}[t]
    \centering
    \hspace{2pt}
     \resizebox{0.99\textwidth}{!}{\begin{tikzpicture}
	
    \begin{scope}[very thick, dashed, gray]
    \draw (0pt,0pt) rectangle (50pt,30pt);
    \draw (0pt,50pt) rectangle (50pt,80pt);
    \draw (80pt,0pt) rectangle (130pt,30pt);
    \draw (80pt,50pt) rectangle (130pt,80pt);
    \end{scope}

    \node[anchor=center] at (25pt,-8pt) {\small $V_{2,1}$};
    \node[anchor=center] at (105pt,-8pt) {\small $V_{2,2}$};
    \node[anchor=center] at (25pt,87pt) {\small $V_{1,1}$};
    \node[anchor=center] at (105pt,87pt) {\small $V_{1,2}$};

    \begin{scope}[thick, arrows=-stealth]
    \draw (15pt,8pt) -- (32pt,8pt);
    \draw (15pt,22pt) -- (33pt,11pt);
    \draw (35pt,8pt) -- (87pt,8pt);
    \draw (35pt,22pt) -- (88pt,11pt);

    \draw (8pt,65pt) -- (22pt,72pt);
    \draw (42pt,65pt) -- (84pt,65pt);
    \draw[densely dashed] (42pt,65pt) -- (119pt,11.5pt);
    \draw (25pt,72pt) -- (109pt,72pt) -- (119pt,67pt);

    \draw (90pt,8pt) -- (116pt,8pt);

    \draw (88pt,65pt) -- (101pt,65pt);
    \draw (105pt,65pt) -- (118pt,65pt);
    \draw[densely dashed] (35pt,22pt) -- (86pt,63pt);
    \draw (35pt,8pt) -- (103pt,63pt);

    \draw[very thick, gray] (52pt,35pt) -- (27pt,56pt);
    \draw[very thick, gray] (96pt,27pt) -- (99pt,16.5pt);
    
    \end{scope}

    \draw[black, fill=white] (15pt,22pt) circle (0.9ex);
    \draw[black, fill=white] (15pt,8pt) circle (0.9ex);
    \draw[black, fill=white] (35pt,22pt) circle (0.9ex);
    \draw[black, fill=white] (35pt,8pt) circle (0.9ex);
    \node[anchor=center] at (35pt,22pt) {\small $v$};

    \draw[black, fill=white] (8pt,65pt) circle (0.9ex);
    \draw[black, fill=white] (25pt,72pt) circle (0.9ex);
    \draw[black, fill=white] (42pt,65pt) circle (0.9ex);
    \node[anchor=center] at (42pt,65pt) {\small $u$};
    \draw[black, fill=white] (25pt,58pt) circle (0.9ex);
    \node[anchor=center] at (25pt,58pt) {\small $v$};

    \draw[black, fill=white] (90pt,8pt) circle (0.9ex);
    \draw[black, fill=white] (120pt,8pt) circle (0.9ex);
    \draw[black, fill=lightgray] (101pt,14pt) circle (0.9ex);
    \node[anchor=center] at (101pt,14pt) {\small $u$};

    \draw[black, fill=white] (88pt,65pt) circle (0.9ex);
    \draw[black, fill=white] (105pt,65pt) circle (0.9ex);
    \draw[black, fill=white] (122pt,65pt) circle (0.9ex);


\begin{scope}[very thick, dashed, gray]
    \draw (200pt,0pt) rectangle (250pt,30pt);
    \draw (200pt,50pt) rectangle (250pt,80pt);
    \draw (280pt,0pt) rectangle (330pt,30pt);
    \draw (280pt,50pt) rectangle (330pt,80pt);
    \draw[arrows=angle 60-] (250pt,76pt) -- (280pt,76pt);
    \draw[arrows=angle 60-] (250pt,4pt) -- (280pt,4pt);
    \end{scope}

    \node[anchor=center] at (225pt,-8pt) {\small $V_{2,1}$};
    \node[anchor=center] at (305pt,-8pt) {\small $V_{2,2}$};
    \node[anchor=center] at (225pt,87pt) {\small $V_{1,1}$};
    \node[anchor=center] at (305pt,87pt) {\small $V_{1,2}$};

    \begin{scope}[thick, arrows=-stealth]
    \draw (215pt,8pt) -- (232pt,8pt);
    \draw (215pt,22pt) -- (232pt,22pt);
    \draw (235pt,8pt) -- (293pt,13pt);
    \draw (235pt,22pt) -- (293pt,17pt);
    \draw[densely dashed] (235pt,22pt) -- (287pt,63pt);
    \draw (235pt,8pt) -- (271pt,40pt) -- (330pt,40pt) -- (345pt,55pt);

    \draw (215pt,72pt) -- (232pt,72pt);
    \draw (235pt,72pt) -- (287pt,67pt);

    \draw (295pt,15pt) -- (311pt,15pt);

    \draw (289pt,65pt) -- (303pt,72pt);
    \draw (289pt,65pt) -- (303pt,58pt);
    \draw (306pt,72pt) -- (321pt,67pt);
    \draw (306pt,58pt) -- (321pt,63pt);

    \draw[very thick, gray] (215pt,22pt) -- (215pt,56pt);
    \draw[very thick, gray] (235pt,22pt) -- (235pt,56pt);
    
    \end{scope}

    \draw[black, fill=white] (215pt,22pt) circle (0.9ex);
    \node[anchor=center] at (215pt,22pt) {\small $u$};
    \draw[black, fill=white] (215pt,8pt) circle (0.9ex);
    \draw[black, fill=white] (235pt,22pt) circle (0.9ex);
    \node[anchor=center] at (235pt,22pt) {\small $v$};
    \draw[black, fill=white] (235pt,8pt) circle (0.9ex);

    \draw[black, fill=lightgray] (215pt,59pt) circle (0.9ex);
    \node[anchor=center] at (215pt,59pt) {\small $u$};
    \draw[black, fill=white] (215pt,72pt) circle (0.9ex);
    \draw[black, fill=lightgray] (235pt,59pt) circle (0.9ex);
    \node[anchor=center] at (235pt,59pt) {\small $v$};
    \draw[black, fill=white] (235pt,72pt) circle (0.9ex);

    \draw[black, fill=white] (295pt,15pt) circle (0.9ex);
    \draw[black, fill=white] (315pt,15pt) circle (0.9ex);

    \draw[black, fill=white] (289pt,65pt) circle (0.9ex);
    \draw[black, fill=white] (306pt,58pt) circle (0.9ex);
    \draw[black, fill=white] (306pt,72pt) circle (0.9ex);
    \draw[black, fill=white] (323pt,65pt) circle (0.9ex);


\begin{scope}[very thick, dashed, gray]
    \draw (400pt,0pt) rectangle (450pt,30pt);
    \draw (400pt,50pt) rectangle (450pt,80pt);
    \draw (480pt,0pt) rectangle (530pt,30pt);
    \draw (480pt,50pt) rectangle (530pt,80pt);
    \draw[arrows=-angle 60] (425pt,50pt) -- (425pt,30pt);
    \end{scope}

    \node[anchor=center] at (425pt,-8pt) {\small $V_{2,1}$};
    \node[anchor=center] at (505pt,-8pt) {\small $V_{2,2}$};
    \node[anchor=center] at (425pt,87pt) {\small $V_{1,1}$};
    \node[anchor=center] at (505pt,87pt) {\small $V_{1,2}$};

    \begin{scope}[thick, arrows=-stealth]
    \draw (415pt,8pt) -- (432pt,8pt);
    \draw (435pt,8pt) -- (486pt,14pt);
    \draw (435pt,8pt) -- (503pt,58pt);

    \draw (415pt,65pt) -- (432pt,72pt);
    \draw (415pt,65pt) -- (432pt,58pt);
    \draw (435pt,72pt) -- (470pt,72pt) -- (475pt,77pt) -- (515pt,77pt) -- (520pt,72pt);
    \draw (435pt,58pt) -- (486pt,70pt);
    \draw[densely dashed] (435pt,58pt) -- (488pt,18pt);

    \draw (488pt,15pt) -- (501pt,15pt);
    \draw (505pt,15pt) -- (518pt,15pt);

    \draw (488pt,70pt) -- (518pt,70pt);
    \draw (488pt,70pt) -- (503pt,62pt);
    \draw (505pt,60pt) -- (520pt,68pt);

    \draw[very thick, gray] (415pt,65pt) -- (415pt,23pt);
    \draw[very thick, gray] (435pt,58pt) -- (435pt,23pt);
    
    \end{scope}

    \draw[black, fill=lightgray] (415pt,20pt) circle (0.9ex);
    \node[anchor=center] at (415pt,20pt) {\small $u$};
    \draw[black, fill=white] (415pt,8pt) circle (0.9ex);
    \draw[black, fill=lightgray] (435pt,20pt) circle (0.9ex);
    \node[anchor=center] at (435pt,20pt) {\small $v$};
    \draw[black, fill=white] (435pt,8pt) circle (0.9ex);

    \draw[black, fill=white] (415pt,65pt) circle (0.9ex);
    \node[anchor=center] at (415pt,65pt) {\small $u$};
    \draw[black, fill=white] (435pt,72pt) circle (0.9ex);
    \draw[black, fill=white] (435pt,58pt) circle (0.9ex);
    \node[anchor=center] at (435pt,58pt) {\small $v$};

    \draw[black, fill=white] (489pt,15pt) circle (0.9ex);
    \draw[black, fill=white] (505pt,15pt) circle (0.9ex);
    \draw[black, fill=white] (522pt,15pt) circle (0.9ex);

    \draw[black, fill=white] (489pt,70pt) circle (0.9ex);
    \draw[black, fill=white] (505pt,60pt) circle (0.9ex);
    \draw[black, fill=white] (522pt,70pt) circle (0.9ex);

\end{tikzpicture}}
     \vspace{-10pt}
    \caption{Examples for each ingredient of the advanced heuristic. In batch replication (left), we remove the $2$ dashed communications simultaneously, replicating $u$ and $v$. In superstep merging (middle), we merge $V_{i,2}$ into $V_{i,1}$ for $i \! \in \! \{ 1,2\}$, and replicate $u$ and $v$ to replace the dashed communication; the other communication can be kept in the merged superstep. In superstep replication (right), we replicate in $V_{2,1}$ all nodes of $V_{1,1}$ that are ever required on processor $2$; this removes the dashed communication.}
    \label{fig:heuristic}
\end{figure*}

The first method is \textit{batch replication}, which replicates single nodes like the basic heuristic, but executes multiple steps at the same time to escape local minima. In particular, the $h$-relation cost in BSP means that if the maximal communication cost is attained in the send and/or receive cost of multiple processors, then all these send and receive costs need to be decreased simultaneously to obtain any decrease in the cost. In batch replication, we consider each superstep in order, and attempt to remove multiple communication steps at the same time; specifically, at least one outgoing/incoming communication from every processor where the send/receive cost is equal to the maximum. We replicate these nodes on the target processors instead, always choosing the superstep (from the interval of valid options) where this increases the compute costs minimally. Once a step is removed from all send/receive costs that saturate the maximum, the communication cost decreases; if the decrease is larger than the total increase in compute costs, we accept the modification. We continue this process as long as it allows to decrease the total cost; otherwise, we proceed to the next superstep.

The second method is \textit{superstep merging}, which considers all consecutive superstep pairs $(s$, $s\!+\!1)$, and attempts to merge them into a single superstep. This requires changes for all communication steps in superstep $s$. Assume that a node $v$ is sent from processor $p_1$ to $p_2$. If $v$ is not yet needed on $p_2$ in superstep $(s\!+\!1)$, we can simply keep this communication step in the new (merged) superstep. If $v$ was already present on $p_1$ before superstep $s$, we can move the communication to the superstep before $s$. Otherwise, $v$ needs to be replicated on $p_2$ in the merged superstep, and we need to make the parents of $v$ also available on $p_2$ by superstep $s$ if not present already. For parents that are already computed before superstep $s$, we can simply send them in the superstep before $s$; however, if a parent of $v$ is first computed on $p_1$ in superstep $s$, we also need to replicate it on $p_2$, and continue the process with the parent's parents in a recursive fashion. Once all the necessary changes are identified, we check the cost of the modified schedule, and execute the merging step if it decreases the cost altogether. Superstep merging is especially useful when the synchronization cost $L$ is large; then it is indeed better to have fewer supersteps, but this is often hard to achieve via single-node improvement steps.

The third method is \textit{superstep replication}, where we take a specific superstep $s$ and pair of processors $p_1$, $p_2$, and consider replicating all nodes of $V_{p_1, s}$ in $V_{p_2, s}$. Intuitively, this may identify a less parallelizable segment of the schedule that was assigned to a single processor, and allows to save communication afterwards by replicating this segment on other processors. During this step, we replicate all nodes in $v \! \in \! V_{p_1, s}$ on $V_{p_2, s}$, except those that are already present on $p_2$ or not needed on $p_2$ at any later point. If $v$ has parents that are not yet present on $p_2$ by superstep $s$, we add communication steps to send these to $p_2$ in superstep $(s\!-\!1)$. As in superstep merging, we analyze the cost change for each such transformation, and only execute those that decrease the total cost of the schedule.

Our advanced heuristic keeps iterating through these 3 methods until none of them allows to further improve the schedule.

\section{Experiments}

\subsection{Experimental setup}

In our experiments, we use a diverse dataset to evaluate the algorithms above. For hypergraph partitioning, we assemble a benchmark from the two applications mentioned before. For MoE partitioning, we use open-source profiled data of expert usage from~\cite{moe_data}, obtained from running two modern LLMs (FP8 quantized Qwen3-235B and 4 bit quantized Deepseek R1) on the Massive Multitask Language Understanding benchmark with various questions on 57 subjects. This describes the $8$-tuple of experts invoked by more than $1.5$ million tokens on each layer. We create hyperedges from the most frequently appearing $8$-tuples on a layer, with the hyperedge weight $\mu(e)$ being the frequency normalized to $[1,10]$. We create a separate hyperedge from each of the first $5$ layers of Qwen and DeepSeek, choosing the frequency threshold such that the number of pins in the given hypergraphs is at least $1\,000$. Any experts that do not appear in these most frequent $8$-tuples are discarded. This gives us $5$ hypergraphs for both Qwen and DeepSeek, with $81\,$--$\,109$ and $220\,$--$\,241$ nodes, respectively. This new dataset, named \texttt{moe-8}, is in itself a small standalone contribution of our paper, which might be a useful benchmark for future works on partitioning.

We also create another version of this dataset where instead of the $8$-tuples, we consider all the ${8 \choose 2}$ pairs of nodes appearing in each $8$-tuple. This may allow deeper insight into the distribution of the most frequently invoked experts; indeed, many approaches for MoE partitioning focus on pairs of experts~\cite{moetuner}. We name this dataset \texttt{moe-2}. As before, we add the most frequently appearing pairs as hyperedges until the number of pins is at least $1\,000$. From the first $5$ layers, we obtain $5$ instances on $80\,$--$\,97$ (Qwen) and $167\,$--$\,239$ (DeepSeek) nodes here. Note that these are actually simple graphs, since we always form hyperedges of size exactly $2$.

For SpMV, we consider a selection of application matrices from the SuiteSparse sparse matrix collection~\cite{davis2011university}. This is similar to the benchmark used in~\cite{jenneskens22}, but with larger matrices. We create two datasets via two different hypergraph models for SpMV, each consisting of $10$ hypergraphs between $92\,$--$\,494$ nodes and $184\,$--$\,1910$ pins. In the fine-grained model~\cite{knigge20, jenneskens22}, each node represents a non-zero, and each hyperedge is a row or column of the matrix, as discussed in Section~\ref{sec:def_part}. In the more coarse-grained row-net model~\cite{catalyurek2002hypergraph}, each node represents a column, and each hyperedge represents a row, connecting the columns with a non-zero in the given row. The datasets are named \texttt{spmv-fg} and \texttt{spmv-rn}, respectively. 

We note that there are also established benchmarks of much larger hypergraphs to evaluate large-scale partitioning heuristics~\cite{hMetis, KaHyPar}; however, these are not viable for ILPs due to their size.

For scheduling, we consider $3$ different datasets of computational DAGs that have previously been used in scheduling-related works:
\begin{itemize}[leftmargin=1.8em, topsep=2pt, itemsep=1.5pt]
 \item \texttt{hdb}: the HyperDAG DataBase introduced in~\cite{BSP_algos_opdas} represents a variety of computations from areas like linear algebra, graph algorithms or machine learning. We use the $3$ largest DAG sets here, giving $52$ DAGs with $1_{\,}000$--$100_{\,}000$ nodes.
 \item \texttt{psdd}: Probabilistic Sentential Decision Diagrams (PSDDs)~\cite{kisa2014psdd} are arithmetic circuits for tractable probabilistic inference, whose evaluation naturally forms an irregular DAG~\cite{shah2022dpu, maene2025klay}. We use PSDD circuits from 
the StarAI Circuit Model Zoo. Excluding tiny DAGs, this gives $29$ DAGs on $2_{\,}500$--$175_{\,}000$ nodes.
 \item \texttt{sptrsv}: the parallel solving of sparse triangular systems can be expressed as a scheduling problem on the DAG representation of a triangular matrix~\cite{zarebavani2022hdagg}. We collect $40$ application matrices from the SuiteSparse matrix collection~\cite{davis2011university}; these correspond to $40$ DAGs with $1_{\,}000$--$150_{\,}000$ nodes.
\end{itemize}
Many of the computations in these benchmarks have a highly irregular structure and are in general hard to parallelize.

We also use the smallest DAG sets from~\cite{BSP_algos_opdas} for the scheduling ILP experiments; this contains $16$ DAGs on $40$--$80$ nodes.

To measure the impact of replication, we consider the ratio between the replicating and non-replicating cost (`cost reduction ratio') for each instance. For datasets, we take the geometric mean of these ratios, and subtract this from $1$ to obtain a mean cost reduction value as a percentage.

We run the algorithms with various parameters that are in line with previous works on partitioning and scheduling~\cite{jenneskens22,BSP_algos_opdas}. For the ILPs, we use a smaller number of processors to limit the total running time of the experiments. Specifically, for hypergraph partitioning, we focus on parameters $P \! \in \! \{ 2, 4\}$ and $\varepsilon \! \in \! \{0.0125, 0.025, 0.05\}$. For the scheduling heuristic, running time is not critical, so we explore a larger set of parameters: our main experiments have $P=8$, $g=4$ and $L=20$, but we also investigate other cases with $P \! \in \! \{2,4,8,16\}$, $g \! \in \! \{1,4,16\}$, $L \! \in \! \{1, 20, 400\}$.

In order to solve the ILP problems, we use COPT, a state-of-the-art commercial ILP solver~\cite{copt}. We use $5$ hours as the time limit for all the ILP problems. We run our experiments using an AMD EPYC 7763 processor (x86), with 1 TB memory and theoretical peak memory throughput of 204.8 GB/s.


The C++ implementation of our algorithms are available open-source 
on GitHub, as well as the new hypergraph benchmarks and the data from our experiments~\cite{folder}.

The datasets and the experimental results are discussed in more technical detail in Appendices~\ref{app:datasets} and \ref{app:exp}, respectively.

\subsection{Results for partitioning}

\begin{figure}[t]
    \centering
    \hspace{-4pt}
    \includegraphics[scale=0.54]{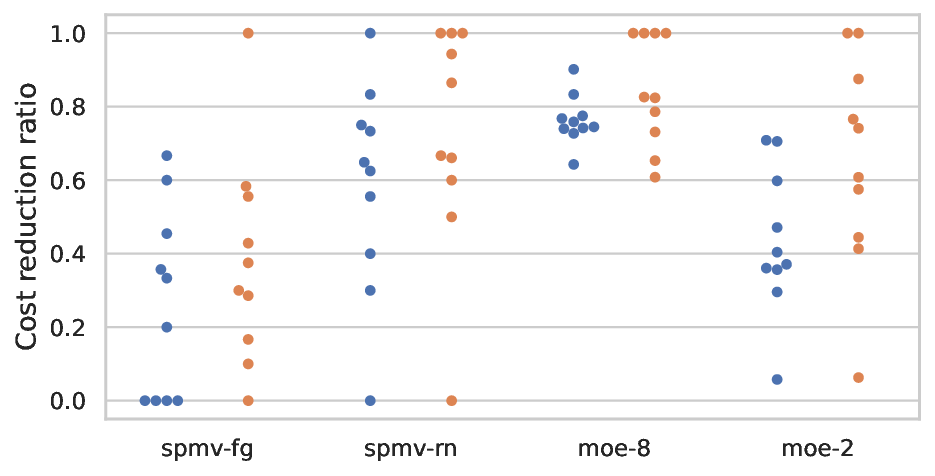}
    \vspace{-16pt}
    \caption{Distribution of cost reduction ratios on each dataset, for $P\!=\!2$, $\varepsilon\!=\!2.5\%$ (blue/left) and $P\!=\!4$, $\varepsilon\!=\!5\%$ (orange/right).}
    \label{fig:plot}
\end{figure}

For partitioning, we compare non-replication and replication via the ILPs in Section~\ref{sec:part}. We always run both ILP/D and ILP/R, and select the better solution found for the cost reduction ratio.

We focus on $P\!=\!2$ and $P\!=\!4$ for our main experiments. However, obtaining `comparable' results for $P\!=\!2$ and $P\!=\!4$ is a non-trivial question. Since each partition can have up to $\frac{n}{P} + \frac{\varepsilon}{P} \! \cdot \! n$ nodes, intuitively, the slack in a partition is proportional to $\frac{n}{P}$ instead of $n$. For $P\!=\!2$, this allows twice as many extra nodes in each partition as in $P\!=\!4$. As such, it is a more reasonable approach instead to keep the value $\hat{\varepsilon}:=\frac{\varepsilon}{P}$ fixed, i.e.\ to compare $\varepsilon\!=\!0.05$ for $P\!=\!4$ with $\varepsilon\!=\!0.025$ for $P\!=\!2$. We use these parameters in our experiments.

The cost reduction ratios for the instances are visualized in Figure~\ref{fig:plot}. The figure shows that replication can drastically decrease the partitioning cost. In particular, the optimal cost with replication actually goes down to $0$ for $5$ instances with $P\!=\!2$, and for $2$ instances with $P\!=\!4$. For the remaining instances, the mean cost reduction is $59.63\%$, $41.31\%$, $23.93\%$, $63.14\%$ on the datasets \texttt{spmv-fg}, \texttt{spmv-rn}, \texttt{moe-8} and \texttt{moe-2}, respectively, for $P\!=\!2$. The same cost reduction numbers are $65.22\%$, $21.93\%$, $16.99\%$ and $42.49\%$ for $P\!=\!4$. This confirms that replication can indeed have a remarkable impact on the optimal communication cost in these problems.

For $P\!=\!2$, the ILP solver finds the optimum in all cases. For $P\!=\!4$, the ILPs are much harder: in many cases, the solver cannot find the optimal solution with proof within the $5$-hour time limit. Specifically, for partitioning without replication, the solver could almost always prove optimality; with replication, this happened in significantly fewer cases. Nonetheless, if the non-replicating solution is optimal, then any better solution found with replication already establishes a lower bound on the gap between the two optimum values. Furthermore, our experience with the ILP solver suggests that the returned solutions are often (close to) optimal after a short time; the time-consuming part for the solver is to actually prove optimality. As such, we believe that our solutions are likely good indicators of the actual optimum cost in most cases.

In Appendix~\ref{app:exp}, we also compare the ILP/D and ILP/R formulations in detail based on how often they outperform each other and find optimal solutions. The results indicate that neither of them is clearly superior: ILP/R is stronger in general, but ILP/D is e.g.\ consistently better on \texttt{moe-8}.

We also note that in the special cases when the optimum cost with replication is actually $0$, algorithms for finding balanced vertex separators could also be a viable alternative to our ILPs.

Finally, we further discuss how the cost reduction depends on $P$ and $\varepsilon$. For this, we use a smaller ablation set of $12$ instances ($3$ from each dataset), all with $n \geq 160$. This reduces the running time of the experiments, and also ensures that the number of nodes allowed in a partition is different for any pair of $\varepsilon$ values we use, so we are indeed solving distinct problems.

For $P\!=\!2$, we compare the cost reduction for different $\varepsilon$ values in Table~\ref{tab:ablation_p2}. These instances are again always solved to optimality. The table shows that as $\varepsilon$ grows, the reductions become significantly larger: while the non-replicating optimum also decreases for larger $\varepsilon$, the impact of replication grows much more rapidly.

\begin{table}[t]
\caption{Mean cost reduction in partitioning for $P\!=\!2$ and different choices of $\varepsilon$ on the ablation set. The brackets show the number of times where cost was reduced to $0$; the mean cost reduction is taken on the remaining instances. \vspace{-8pt}}
  \begin{minipage}[b]{0.48\textwidth}
  \centering
    \renewcommand{\arraystretch}{1.45}
    \begin{tabular}{c || c | c | c | c | }
      & $\!\!$\texttt{spmv-fg}$\!\!$ & $\!\!$\texttt{spmv-rn}$\!\!$ & \texttt{moe-8} & \texttt{moe-2} \\ 
     \hline\hline
     $\!\!\varepsilon=1.25\%$ & 41.52\% (0) & 11.79\% (0) & 11.32\% (0) & 69.66\% (0) \\
     \hline
     $\!\!\varepsilon=2.5\%$ & 63.49\% (1) & 25.87\% (0) & 21.89\% (0) & 80.31\% (0) \\
     \hline
     $\!\!\varepsilon=5\%$ & 66.67\% (2) & 56.32\% (0) & 35.05\% (0) & 80.99\% (1) \\
     \hline
    \end{tabular}
  \end{minipage}
  \hfill
  \label{tab:ablation_p2}
\end{table}

We also run some experiments on the ablation set for larger $P$; we only summarize these here briefly, with the details deferred to Appendix~\ref{app:exp}. For $P\!=\!4$, we again study several $\varepsilon$ values; our results suggest that the mean reduction again grows with $\varepsilon$, although not so sharply. For $P\!=\!8$, we find that the problem is already too challenging for the ILP solver: reasonable solutions are only obtained for \texttt{spmv-fn} within the time limit. The reductions on these instances are almost identical to those of $P\!=\!4$.

\subsection{Results for scheduling}

For DAG scheduling, we analyze the impact of replication via our scheduling heuristics. Our results on the DAG datasets are summarized in Table~\ref{tab:sched_large_summary}. The two numbers in each cell describe the mean cost reduction achieved by the basic and the advanced heuristic, respectively, compared to the non-replicating initial schedule. The table shows that the algorithms, especially the advanced heuristic, can achieve significant cost reductions with replication, even from an already strong baseline schedule. For $P=8$, the improvement on the \texttt{hdb} and \texttt{psdd} datasets is $19.17\%$ and $23.13\%$, respectively, but the improvements on a single instance go up to $53.23\%$ and $33.62\%$. It is also visible that the improvement values grow larger for a higher number of processors; intuitively, parallelizing among more processors inherently requires more communication, and hence allowing replication has a higher impact altogether.

Table~\ref{tab:sched_g_and_l} shows the same numbers for $P\!=\!8$ with different choices of $g$ and $L$. These two BSP parameters directly describe the cost of communication, and hence it is not surprising that a higher $g$ or $L$ comes with higher gains from replication. For $g=16$, the cost reductions are $40.97\%$ and $56.49\%$ for \texttt{hdb} and \texttt{psdd}, respectively; for $L=400$, the same numbers are $58.17\%$ and $34.53\%$. Hence for larger communication costs, replication often allow us to reduce the total costs by more than a $2\times$ factor on average.

Interestingly, the improvements of the basic heuristic actually decrease for larger $L$ values. This is because synchronizations are a much larger portion of the total costs here, and the basic heuristic has no direct way to reduce the number of supersteps.

Note that \texttt{hdb} in fact consist of $3$ subgroups of different sizes: $1_{\,}000$--$2_{\,}000$, $5_{\,}000$--$10_{\,}000$, and $50_{\,}000$--$100_{\,}000$ nodes. Appendix~\ref{app:exp} shows the same results per subgroup, confirming that the improvements are rather consistent over different instance sizes.

Table~\ref{tab:sched_ablation} also analyzes each component of the advanced heuristic separately. This shows that both batch replication (BR) and superstep merging (SM) obtain a considerable further cost reduction from the basic heuristic, and hence they both contribute significantly to the final improvement. Superstep merging is in particular very useful for higher communication costs. On the other hand, the impact of the superstep replication (SR) is marginal at best.

\begin{table}[t]
\caption{Mean cost reduction with the basic / advanced heuristic for scheduling with replication, for $g=4$, $L=20$, organized by $P$ and the dataset. \vspace{-8pt}}
  \begin{minipage}[b]{0.48\textwidth}
  \centering
    \renewcommand{\arraystretch}{1.45}
    \begin{tabular}{c || c | c | c | c | }
     dataset & $P=2$ & $P=4$ & $P=8$ & $P=16$ \\ 
     \hline\hline
     \texttt{hdb} & \makecell{0.48\% /\\ 12.20\%} & \makecell{2.23\% /\\ 16.49\%} & \makecell{4.11\% /\\ 19.17\%} & \makecell{6.24\% /\\ 22.57\%} \\
     \hline
     \texttt{psdd} & \makecell{0.10\% /\\ 15.45\%} & \makecell{1.27\% /\\ 22.36\%} & \makecell{3.74\% /\\ 23.13\%} & \makecell{6.11\% /\\ 23.09\%} \\
     \hline
     \texttt{sptrsv} & \makecell{0.78\% /\\ 3.88\%} & \makecell{3.25\% /\\ 7.74\%} & \makecell{6.29\% /\\ 11.61\%} & \makecell{9.24\% /\\ 15.21\%} \\
     \hline
    \end{tabular}
  \end{minipage}
  \hfill
  \label{tab:sched_large_summary}
\end{table}

\begin{table}[t]
\caption{Mean cost reduction with the basic / advanced heuristic for scheduling with replication, for $P=8$, with different values of $g$ and $L$. \vspace{-8pt}}
  \begin{minipage}[b]{0.48\textwidth}
  \centering
    \renewcommand{\arraystretch}{1.45}
    \begin{tabular}{c || c || c | c || c | c | }
     dataset & \makecell{$g=4$ \\ $L=20$} & \makecell{$g=1$ \\ $L=20$} & \makecell{$g=16$ \\ $L=20$} & \makecell{$g=4$ \\ $L=1$} & \makecell{$g=4$ \\ $L=400$} \\ 
     \hline\hline
     \texttt{hdb} & \makecell{4.11\% /\\ 19.17\%} & \makecell{2.37\% /\\ 9.22\%} & \makecell{6.12\% /\\ 40.97\%} & \makecell{4.47\% /\\ 15.31\%} & \makecell{2.14\% /\\ 58.17\%} \\
     \hline
     \texttt{psdd} & \makecell{3.74\% /\\ 23.13\%} & \makecell{2.66\% /\\ 4.11\%} & \makecell{5.03\% /\\ 56.49\%} & \makecell{3.88\% /\\ 22.03\%} & \makecell{2.40\% /\\ 34.53\%}  \\
     \hline
     \texttt{sptrsv} & \makecell{6.29\% /\\ 11.61\%} & \makecell{2.58\% /\\ 5.56\%} & \makecell{12.7\% /\\ 26.45\%} & \makecell{7.10\% /\\ 10.97\%} & \makecell{3.10\% /\\ 39.18\%} \\
     \hline
    \end{tabular}
  \end{minipage}
  \hfill
  \label{tab:sched_g_and_l}
\end{table}

\begin{table}[t]
\caption{Ablation study: cost decrease obtained by activating only single components of the advanced heuristic, i.e.\ batch replication (BR), superstep merging (SM) and superstep replication (SR), compared to using only the basic heuristic (B). \vspace{-8pt}}
  \begin{minipage}[b]{0.48\textwidth}
  \centering
    \renewcommand{\arraystretch}{1.45}
    \begin{tabular}{c || c | c | c || c | c | c | }
     \multirow{2}{*}{$\!\!\!$heuristic$\!\!$} & \multicolumn{3}{|c|}{$P=8$,  $g=4$,  $L=20$} & \multicolumn{3}{|c|}{$P=8$,  $g=16$,  $L=20$} \\
     \hhline{~|------|}
      & \texttt{hdb} & \texttt{psdd} & $\!$\texttt{sptrsv}$\!$ & \texttt{hdb} & \texttt{psdd} & $\!$\texttt{sptrsv}$\!$ \\ 
     \hline\hline
     B & $\!$4.11\%$\!$ & $\!$3.74\%$\!$ & $\!$6.29\%$\!$ & $\!$6.12\%$\!$ & $\!$5.03\%$\!$ & $\!$12.71\%$\!$ \\
     \hline
     $\!\!$ B + BR & $\!$11.90\%$\!$ & $\!$18.38\%$\!$ & $\!$9.56\%$\!$ & $\!$21.26\%$\!$ & $\!$28.99\%$\!$ & $\!$20.33\%$\!$ \\
     \hline
     $\!\!$ B + SM & $\!$14.63\%$\!$ & $\!$21.06\%$\!$ & $\!$10.02\%$\!$ & $\!$38.69\%$\!$ & $\!$55.61\%$\!$ & $\!$23.64\%$\!$ \\
     \hline
     $\!\!$ B + SR & $\!$4.72\%$\!$ & $\!$3.77\%$\!$ & $\!$6.39\%$\!$ & $\!$7.03\%$\!$ & $\!$5.03\%$\!$ & $\!$12.87\%$\!$ \\
     \hline
    \end{tabular}
  \end{minipage}
  \hfill
  \label{tab:sched_ablation}
\end{table}

The results confirm that replication can drastically improve on heuristic schedules; however, it is not clear how these costs relate to optimality. Hence we also use ILPs to find the optimal schedules on some very small computational DAGs from~\cite{BSP_algos_opdas}. We run these experiments for $P \! \in \! \{2,4\}$, with $g\!=\!4$ and $L\!=\!5$; the mean cost reductions are $12.99\%$ and $21.08\%$ for $P=2$ and $P=4$, respectively. This is consistent with the results of the advanced heuristic on \texttt{hdb} and \texttt{psdd} in Table~\ref{tab:sched_large_summary}. More details are deferred to Appendix~\ref{app:exp}.

\section{Conclusion}

Our results demonstrate that replication can indeed significantly reduce the optimal cost in partitioning and scheduling problems. For partitioning, it can reduce the communication cost by $17\%$--$65\%$, or even allow to avoid all communication entirely. In scheduling, the cost can be reduced by $11.61\%$--$23.13\%$, or up to $58.17\%$ for some settings. This is especially impressive because allowing recomputation often comes at no other disadvantage or practical cost than requiring model flexibility: in partitioning, the maximal assignment to each partition remains unchanged, whereas in scheduling, the increased compute cost is already factored into these values.

The results also show that, unsurprisingly, the gain increases considerably as communication becomes a more dominant part of the problem, either directly via cost parameters or indirectly via the number of processors.

Besides the theoretical and experimental insights, our (advanced) heuristic for scheduling is a standalone contribution to utilize this approach for larger DAGs in applications. We believe that developing similar heuristics for replication in both partitioning and scheduling problems is one of the most interesting and impactful opportunities for future work.

\clearpage

\bibliographystyle{ACM-Reference-Format}
\bibliography{references}

\appendix

\section{Theoretical results}

We first discuss the technical details of the theoretical proofs.

\subsection{Example constructions for cost reduction} \label{app:reduction}

We begin with a more detailed discussion of the examples that show a large cost reduction with replication.

For partitioning, consider two cliques of size $\frac{1+\varepsilon}{2} \cdot n$ that intersect in $\varepsilon \cdot n$ nodes (we can take the floor of these value in case they are not integers). Without replication, any partitioning has a cost of at least $\Omega(n^2)$: indeed, the $\varepsilon \cdot n$ nodes in the intersection are adjacent to all other nodes, so from their outgoing edges, at least $\frac{1-\varepsilon}{2} \cdot n$ are cut in any solution. On the other hand, if we are allowed to replicate the $\varepsilon \cdot n$ common nodes, the cost of the optimal partitioning is $0$.

For scheduling, let $c$ be a constant parameter, and let us choose $P$ and $g$ as constants such that $g > P \! \cdot \! (P \! \cdot \! c + 1)$. Let $m=\frac{n}{c \cdot P + 1}$. Consider a DAG which consists of a set $U$ of $m$ nodes, and $P$ distinct sets $U_1$, ..., $U_P$ of $m \! \cdot \! c$ nodes each, such that for all $i \! \in \! [P]$, every node in $U$ has an edge to every node in $U_i$.

Without replication, the optimal schedule is to use a single processor and superstep, resulting in a total (compute-only) cost of $n = m \! \cdot \! (P \! \cdot \! c + 1)$. Indeed, if any two nodes from $U_1$, ..., $U_P$ are computed on different processors, then all nodes from $U$ must be sent to at least one other processor, resulting in at least $m$ communication steps, and hence a communication cost of at least $\frac{m \cdot g}{P}$. This is already larger than $m \! \cdot \! (P \! \cdot \! c + 1)$ since we chose $g > P \! \cdot \! (P \! \cdot \! c + 1)$. However, when $U_1$, ..., $U_P$ are assigned to the same processor $p$, the nodes in $U$ are also cheaper to assign to $p$: computing one of them on $p$ increases the compute cost by at most $1$, but decreases the communication costs (of values received by $p$) by $g > 1$.

On the other hand, with replication, we can compute $U$ on all processors, and assign a separate $U_i$ to each processor; this requires no communication, and its total (compute) cost in the single superstep is $(c+1) \! \cdot \! m$. As such, the ratio between the optimum with and without replication is $\frac{P \cdot c + 1}{c+1}$; with $c$ chosen large enough, this gets arbitrarily close to $P$.

\subsection{Proof of Theorem~\ref{th:part_inapprox}}

For the proof of Theorem~\ref{th:part_inapprox}, we use a reduction from the balanced vertex separator problem. In general, this problem requires us to partition $V$ into three disjoint sets $V_a$, $V_b$ and $V'$ such that there are no edges $(u,v) \in E$ such that $u \in V_a$ and $v \in V_b$, the size of the separator $|V'|$ is minimized, and the sizes $|V_a|$, $|V_b|$ satisfy some kind of a balance constraint. There are several variations of this balance constraint in the literature; in particular, many works on complexity and algorithms focus on a version where we must have $\max(|V_a|, |V_b|) \leq \alpha \cdot n$ after the separation for some constant $\alpha \geq \frac{1}{2}$, often chosen as $\alpha = \frac{2}{3}$. 

However, in our proof, we use the variant of the problem considered in e.g.~\cite{muller1991alpha}, which requires that $\min(|V_a|, |V_b|) \geq \frac{1}{2} \cdot (n-|V'|)$, or in other words, that $|V_a|=|V_b|$. It is shown in~\cite{muller1991alpha} that this problem is NP-hard, via a simple reduction from the balanced complete bipartite subgraph  (BCBS) problem. In particular, the problem of finding a balanced complete bipartite subgraph of size $k+k$ in a bipartite graph of size $n$ is equivalent to finding a separator $|V'|$ of size $n-2k$ in the complement graph.

When the NP-hardness of BCBS is discussed, this often happens without any assumptions on $k$~\cite{garey2002computers}. As such, we must begin by showing that for any $0 < \delta < \frac{1}{2}$, the BCBS problem is NP-hard for the specific choice of $k=\lceil \delta \cdot n \rceil$. Indeed, for any $0 < \delta < \frac{1}{2}$, assume we have a polynomial-time $\delta$-BCBS algorithm for finding a BCBS of size $\lceil \delta \cdot n \rceil$; we show how to use this to solve original problem of finding a BCBS for some specific size $k \geq 2$, which is known to be NP-hard. If we happen to have $\lceil \delta \cdot n \rceil < k$, then we can repeatedly add a pair of new nodes $v_1, v_2$ to the graph that are only connected to each other. These cannot be contained in a BCBS; furthermore, each such pair increases the value of $\delta \cdot n$ by less than $1$. Hence at some point we get $k=\lceil \delta \cdot n \rceil$, and then our $\delta$-BCBS algorithm will find a BCBS of size exactly $k$. On the other hand, if $\lceil \delta \cdot n \rceil > k$, we can repeatedly add a pair of new nodes $v_1, v_2$ that are connected to each other, and also $v_i$ is connected to all nodes in class $i$ for $i \in \{ 1, 2\}$. These can be included in any BCBS, and adding a new pair increases the size of our BCBS by $1$, while it increases $\delta \cdot n$ by strictly less than $1$. As such, after adding sufficiently many pairs $\ell$, we get that $k+\ell=\lceil \delta \! \cdot \! n \rceil$; a BCBS of size $(k+\ell)$ exists in the new graph exactly if a BCBS of size $k$ exists in the original one. Note that the number of pairs that need to be added in both cases is polynomial in $n$ for any given $\delta$. Hence for any $0 < \delta < \frac{1}{2}$, the problem of finding a BCBS of size at least $\delta \cdot n$ is NP-hard.

From here, the NP-hardness of partitioning with replication follows exactly as in the reduction of~\cite{muller1991alpha}. Specifically, we can assume for simplicity that in any partitioning with replication that both classes $V_1$ and $V_2$ contain exactly $\lfloor \frac{(1+\varepsilon)}{2}\cdot n \rfloor$ nodes, otherwise, we can insert arbitrary further nodes into the class without increasing the cost. In this case, balanced vertex separator  is essentially identical to partitioning with replication. In particular,  $V_1$, $V_2$ is a partitioning with replication that has cost $0$ if and only if $V_1 \setminus V_2$, $V_2 \setminus V_1$ and $V_1 \cap V_2$ form a balanced vertex separator as defined above: both $V_1 \setminus V_2$ and $V_2 \setminus V_1$ need to have at least $\lceil \frac{(1-\varepsilon)}{2}\cdot n \rceil$ nodes, and there can be no edges from $V_1 \setminus V_2$ to $V_2 \setminus V_1$ in order for the partitioning cost to be $0$.

As such, for some $\varepsilon \in (0, 1)$, assume we can decide in polynomial-time if a graph partitioning with replication and with cost $0$ exists. Let $\delta = \frac{(1-\varepsilon)}{2}$, and consider an instance of the $\delta$-BCBS problem. As in~\cite{muller1991alpha}, we can take the complement of the input graph; this complement has a partitioning with replication of cost $0$ exactly if the original graph has a $\delta$-BCBS. This completes the reduction.

Since it is already NP-hard to decide whether the optimum of the partitioning problem is $0$ or $1$, the inapproximability result follows.

\subsection{Proof of Theorem~\ref{th:additive}}

We know from the previous proof that for any $\varepsilon>0$, it is NP-hard to decide if there is a partitioning with replication of cost $0$. For Theorem~\ref{th:additive}, we provide a reduction from this previous problem.

Given a graph $G$ on $N$ nodes, we construct another graph $G'$ on $n$ nodes such that if the optimum (for partitioning with replication) in $G$ is $0$, then it is also $0$ in $G'$, but if the optimum is at least $1$ in $G$, then it is at least $n^{2-\delta}$ in $G'$. Through our previous NP-hardness result, this implies that it is NP-hard to approximate the optimum to an $n^{2-\delta}$ additive term.

Consider a small constant $\delta > 0$, and let us select $n=N^{1/\delta}$; this way, $n$ is still polynomial in $N$ for any $\delta > 0$. Let us replace every node $v$ in $G$ by a clique of size $n^{1-\delta}$; this way, the total number of nodes is $G'$ is indeed $N \cdot n^{1-\delta} = N^{(1-\delta)/\delta +1}=N^{1/\delta}$. For each edge $(u,v)$ in $G$, let us draw an edge from every node of the clique of $u$ to every node in the clique of $v$ in $G'$. Note that in this construction, any two nodes in a clique always have the exact same set of neighbors except for each other.

For ease of presentation, let us refer to the three sets $V_1 \cap V_2$, $V_1 \setminus V_2$ and $V_2 \setminus V_1$ as divisions. Let us say that a clique (corresponding to an original node of $G$) is unsplit if all its nodes are in the same division, and it is split otherwise.

Considering the original problem in $G$, let us use $B = \lceil \frac{1-\varepsilon}{2} \cdot N \rceil$ to denote the minimal number of nodes we need to place in both $V_1 \setminus V_2$ and $V_2 \setminus V_1$. Through most of the proof, we assume for simplicity that also in $G'$, the number of nodes we need to place in both $V_1 \setminus V_2$ and $V_2 \setminus V_1$ is $B \cdot n^{1-\delta}$; in other words, both $V_1 \setminus V_2$ and $V_2 \setminus V_1$ can fit exactly $B$ of our cliques in $G'$. Note that this is not necessarily the case due to the ceiling function, since we might have $\lceil \frac{1-\varepsilon}{2} \cdot N \cdot n^{1-\delta} \rceil \neq \lceil \frac{1-\varepsilon}{2} \cdot N \rceil \cdot n^{1-\delta}$, but we will address this technicality separately in the end.

It is not hard to see that the optimum is $0$ in $G$ exactly if it is $0$ in $G'$. If the optimum is $0$ in $G$, then we easily obtain a solution of cost $0$ in $G'$ by placing the corresponding cliques into the same divisions. On the other hand, assume we have a solution of cost $0$ in $G'$. Then due to the sizes of the divisions, we can select $B$ cliques for $V_1 \setminus V_2$ and $B$ cliques for $V_2 \setminus V_1$ such that the cliques have at least one node in the corresponding division. The two selections will also be disjoint: our partitioning in $G'$ has cost $0$, so no clique can intersect both divisions. Furthermore, there are also no edges from a clique selected for $V_1 \setminus V_2$ to a clique selected for $V_2 \setminus V_1$. Thus in $G$, this gives us $B+B$ nodes corresponding to the selected cliques with no edges from the nodes selected for $V_1 \setminus V_2$ to the nodes selected for $V_2 \setminus V_1$. Placing these nodes into $V_1 \setminus V_2$ and $V_2 \setminus V_1$ creates a solution of cost $0$ in $G$.

It remains to show that the optimum in $G'$ is either $0$ or at least $n^{2-\delta}$. We transform an arbitrary solution in $G'$ to a new solution where a smaller number of cliques is split, without increasing the cost. Assume first that there are two cliques $C_1$ and $C_2$ which both have nodes in both $V_1 \setminus V_2$ and $V_1 \cap V_2$. Assume without loss of generality that an (arbitrary) node in $C_1$ has at least as many edges towards $V_2 \setminus V_1$ as a node in $C_2$. We can then simultaneously move nodes of $C_1$ from $V_1 \setminus V_2$ to $V_1 \cap V_2$ and nodes of $C_2$ from $V_1 \cap V_2$ to $V_1 \setminus V_2$ without increasing the cost. Specifically, if $\ell = \min (|C_1 \cap (V_1 \setminus V_2)|, |C_2 \cap V_1 \cap V_2|)$, then we can move $\ell$ nodes of $C_1 \cap (V_1 \setminus V_2)$ to $V_1 \cap V_2$ and $\ell$ nodes of $C_2 \cap V_1 \cap V_2$ to $V_1 \setminus V_2$. This decreases the number of cliques split between $V_1 \setminus V_2$ and $V_1 \cap V_2$ by at least $1$.

We can continue this procedure until there is at most $1$ clique that intersects both $V_1 \setminus V_2$ and $V_1 \cap V_2$. We can do the same in a similar fashion to cliques split between $V_2 \setminus V_1$ and $V_1 \cap V_2$.

Similarly, assume there are two cliques $C_1$ and $C_2$ which both have nodes in both $V_1 \setminus V_2$ and $V_2 \setminus V_1$. Let $a_{1,1}$ and $a_{1,2}$ denote the number of edges from a node in $C_1$ to $V_1 \setminus (V_2 \cup C_1 \cup C_2)$ and $V_2 \setminus (V_1 \cup C_1 \cup C_2)$, respectively, and $a_{2,1}$ and $a_{2,2}$ denote the number of edges from a node in $C_2$ to $V_1 \setminus (V_2 \cup C_1 \cup C_2)$ and $V_2 \setminus (V_1 \cup C_1 \cup C_2)$, respectively. If $a_{1,1} - a_{1,2} -a_{2,1} + a_{2,2} \leq 0$, then we can again move nodes of $C_1$ from $V_1 \setminus V_2$ to $V_2 \setminus V_1$ and nodes of $C_2$ from $V_2 \setminus V_1$ to $V_1 \setminus V_2$, whereas if $a_{1,1} - a_{1,2} -a_{2,1} + a_{2,2} > 0$, we can move nodes of $C_1$ from $V_2 \setminus V_1$ to $V_1 \setminus V_2$ and nodes of $C_2$ from $V_1 \setminus V_2$ to $V_2 \setminus V_1$, as long as one of $C_1$ and $C_2$ has no more nodes in either $V_1 \setminus V_2$ or $V_2 \setminus V_1$. As for the edges between $C_1$ and $C_2$ in this process, note that there are either no edges from $C_1$ to $C_2$, or all nodes of $C_1$ are adjacent to all nodes of $C_2$; in the latter case, the number of cut edges that go from $C_1$ to $C_2$ remains unchanged in each move.

At this point, there is at most one clique that is split between any two specific divisions. There can be no clique $C$ that intersects all three divisions, because then every other clique should be unsplit, which is not possible if the sizes of all three divisions are integer multiples of the clique size. Only two possible cases remain.

One possibility is that we have no unsplit cliques at all. In this case, if there is a clique in $V_1 \setminus V_2$ and a clique in $V_2 \setminus V_1$ such that the corresponding nodes are adjacent in $G$, then this already results in $n^{1-\delta} \cdot n^{1-\delta}$ cut edges due to the complete bipartite connection. If there are no such cliques, then we have no cut edges at all, and the solution has cost $0$.

As the other possibility, there still might be three split cliques $C_1$, $C_2$, $C_3$ such that $C_1$ intersects $V_1 \setminus V_2$ and $V_2 \setminus V_1$, $C_2$ intersects $V_2 \setminus V_1$ and $V_1 \cap V_2$, and $C_3$ intersects $V_1 \cap V_2$ and $V_1 \setminus V_2$. In this case, for some $c_1$, $c_2$ with $c_1+c_2=n^{1-\delta}$, we have $c_1=|C_1 \cap (V_1 \setminus V_2)|=|C_2 \cap (V_2 \setminus V_1)|=|C_3 \cap (V_1 \cap V_2)|$ and $c_2=|C_1 \cap (V_2 \setminus V_1)|=|C_2 \cap (V_1 \cap V_2)|=|C_3 \cap (V_1 \setminus V_2)|$, once again because the set sizes are integer multiples of $n^{1-\delta}$. Assume w.l.o.g.\ that $c_1 \leq c_2$. If a node in $C_1 \cap V_1$ has at least as many edges towards $V_2 \setminus V_1$ as a node of $C_3$, then we can swap the nodes $C_1 \cap V_1$ and $C_3 \cap V_2$: both sets have cardinality $c_1$, the number of cut edges is not increased, and we end up with $C_3 \subseteq (V_1 \setminus V_2)$, and only two split cliques between $V_1 \cap V_2$ and $V_2 \setminus V_1$, which can then be removed with the methods above, again resulting in no unsplit cliques at all.

On the other hand, assume that a node in $C_3$ has more edges towards $V_2 \setminus V_1$ than a node in $C_1 \cap V_1$. With $c_1 \leq c_2$, both the size of $C_1 \cap V_2$ and the size of $C_3 \cap (V_1 \setminus V_2)$ is at least $\frac{1}{2} \cdot n^{1-\delta}$. Hence nodes in $C_1 \cap V_1$ have at least $\frac{1}{2} \cdot n^{1-\delta}$ edges towards $V_2 \setminus V_1$ (they are adjacent to $C_1 \cap V_2$), so nodes in $C_3$ also have at least $\frac{1}{2} \cdot n^{1-\delta}$ neighbors in $V_2 \setminus V_1$. With $C_3 \cap (V_1 \setminus V_2) \geq \frac{1}{2} \cdot n^{1-\delta}$, this is at least $\frac{1}{4} \cdot n^{2-2 \cdot \delta} = \Omega(n^{2-2\delta})$ cut edges.

Note that above we showed a factor of $n^{2-2\delta}$ or $\Omega(n^{2-2\delta})$, whereas the theorem has $n^{2-\delta}$ for any $\delta > 0$. However, both the factor $2$ in the exponent and the constant in the $\Omega$ can be removed with a choice of any smaller $\delta' < \frac{1}{2} \cdot \delta$ in the same proof, since we asymptotically have $\Omega(n^{2-2\delta'}) > n^{2-\delta}$.

Finally, recall that we assumed $\lceil \frac{1-\varepsilon}{2} \cdot N \cdot n^{1-\delta} \rceil = \lceil \frac{1-\varepsilon}{2} \cdot N \rceil \cdot n^{1-\delta}$ for convenience, whereas this is often not the case. To resolve this, we can select a different balance parameter $\varepsilon'$ such that $0 < \varepsilon' < \varepsilon $, and we in fact consider a reduction from the original problem with $\varepsilon'$; recall that Theorem~\ref{th:part_inapprox} holds for any $\varepsilon' \! \in \! (0,1)$. If we consider $n$ large enough, we can always ensure $\lceil \frac{1-\varepsilon}{2} \cdot N \cdot n^{1-\delta} \rceil < \lceil \frac{1-\varepsilon'}{2} \cdot N \rceil \cdot n^{1-\delta}$. We now add several extra nodes to the construction that are adjacent to every other node in $G'$; we do this until we have $\lceil \frac{1-\varepsilon}{2} \cdot n_0 \rceil = \lceil \frac{1-\varepsilon'}{2} \cdot N \rceil \cdot n^{1-\delta}$ for the new total number of nodes $n_0$. This is indeed always possible, since incrementing $n_0$ by $1$ increases $\frac{1-\varepsilon}{2} \! \cdot \! n_0$ by at most $\frac{1}{2}$. This ensures that exactly $\lceil \frac{1-\varepsilon'}{2} \! \cdot \! N \rceil$ cliques fit into both $V_1 \setminus V_2$ and $V_2 \setminus V_1$. Furthermore, we can assume that the extra nodes are always in $V_1 \cap V_2$; otherwise, swapping an extra node with a non-extra node in $V_1 \cap V_2$ never increases the number of cut edges. With all the extra nodes in $V_1 \cap V_2$, the problem is reduced to the simplified setting above: both $V_1 \setminus V_2$ and $V_2 \setminus V_1$ has enough space for exactly $\lceil \frac{1-\varepsilon'}{2} \cdot N \rceil$ cliques, and $V_1 \cap V_2$ has enough space exactly for the remaining cliques besides the extra nodes.

It only remains to discuss the effect of these extra nodes on $n_0$. For a naive bound, we will ensure with the choice of $\varepsilon'$ that we have $(1-\varepsilon') \leq 2 \cdot (1 - \varepsilon)$; after restructuring, this is equivalent to $\varepsilon' \geq 2 \! \cdot \! \varepsilon -1$. In case if $\varepsilon \leq \frac{1}{2}$, this holds automatically for any $\varepsilon' > 0$. Otherwise, if $\varepsilon > \frac{1}{2}$, we can specifically select $\varepsilon' = 2 \! \cdot \! \varepsilon \! - \! 1$. Furthermore, note that for $N$ large enough, we also have $\lceil \frac{1-\varepsilon'}{2} \cdot N \rceil \leq 2 \cdot \frac{1-\varepsilon'}{2} \cdot N$. Altogether, this implies that $\lceil \frac{1-\varepsilon'}{2} \cdot N \rceil \cdot n^{1-\delta} \leq 4 \cdot \frac{1-\varepsilon}{2} \cdot N \cdot n^{1-\delta}$; as such, the new total number of nodes is at most $n_0 \leq 4 \cdot N \cdot n^{1-\delta} = 4 \cdot n$ after adding the extra nodes. This means that in our proof, the number of cut edges becomes $n^{2-\delta} \geq \frac{1}{16} \cdot n_0\,\!^{2-\delta} = \Omega(n_0\,\!^{2-\delta})$. To remove the constant factor and obtain $n_0\,\!^{2-\delta}$, we can again apply the same proof with a choice of any $\delta'<\delta$.

\subsection{Proof of Lemma~\ref{lem:nphard}} \label{app:nphard}

Lemma~\ref{lem:nphard} claims that the properties for BSP scheduling on specific subclasses of DAGs in~\cite{BSP_DAG_opdas} also carry over to BSP scheduling with replication.

In DAGs with out-degree at most $1$, there is nothing to gain from replicating a node $v$. If $v$ has no out-neighbor, we can simply remove one of the replicas of $v$ without any effect. If $v$ has an out-neighbor $u$, we can assume that $u$ is not replicated (via an induction in a reverse topological order), but computed on a single processor $p$. If $v$ is computed on $p$, then its other replica can be discarded. If $v$ is computed on two distinct devices different from $p$, then $v$ can be discarded from the processor which sends the value of $v$ to $p$ in a later superstep (or does not send it at all); if both send it in the same superstep, we can discard an arbitrary one. As such, the problems in this case are essentially equivalent to scheduling without replication.

For the hardness results on $2$-layer DAGs, we do need to make some modifications to the original proof construction in~\cite{BSP_DAG_opdas}. Note that once again, the sink nodes are not replicated in any reasonable solution, but the source nodes might be. For instance, for the $2$-block gadgets that are a main ingredient of the construction, our new setting now only ensures that if the second-level blocks are split, then each node in the first level must be either sent \emph{or replicated}; as such, we need to increase the size of the first-level blocks appropriately to ensure that replicating each node from the block also comes with a prohibitively large cost. We have the same case for the first-level nodes $v_{e'}$ and $\hat{v}_i$ that might be needed on both processors: these can now either be sent or replicated. We pick any $g > 1$ to ensure that replication is the cheaper method, and we also modify the cost limit from $C_0 = \frac{n}{2} +
(|E'|- {k \choose 2}) \! \cdot \! g$ to $C_0 = \frac{n}{2} + (|E'|- {k \choose 2})$, so that besides the minimal work cost, it allows at most $(|E'|- {k \choose 2})$ replication steps on both processors. The arguments of the original proof then all carry over to the new setting, with communication steps replaced by replications. For more details, we refer the reader to the original proof in~\cite{BSP_DAG_opdas}.

\subsection{Proof of Theorem~\ref{thm:surplus}} \label{app:spes}

The proof construction for Theorem~\ref{thm:surplus} is also similar to the 2-layer DAG construction of~\cite{BSP_DAG_opdas}, but requires more significant changes. Here we focus on the modifications to the construction; for a detailed discussion of the original construction that motivates the idea, we again refer the reader to~\cite{BSP_DAG_opdas}.

We present the proof for BSP scheduling with replication; we discuss in the end how to also adapt it to BSP scheduling without replication.

The main building blocks of the construction are $2$-level block gadgets $(U^{(1)}, U^{(2)})$, which consist of a set $U^{(1)}$ of source nodes, a set $U^{(2)}$ of sink nodes, and edges $(u_1, u_2)$ for all $u_1 \! \in \! U^{(1)}$, $u_2 \! \in \! U^{(2)}$. For all blocks, we have $|U^{(1)}| \geq M_1$, $|U^{(2)}| \geq M_2$ for some large values $M_1$, $M_2$. Note that there is no reason to replicate any of the nodes in $U^{(2)}$, since they are sink nodes. Furthermore, in any reasonable solution, all nodes of $U^{(2)}$ are assigned to the same processor; otherwise, each node in $U^{(1)}$ must be either communicated or replicated, which results in a cost of at least $\frac{M_1}{2} \cdot \min(1, g)$, and the construction ensures that we can always trivially find a solution of lower cost. As such, we may assume that each $U^{(2)}$ is assigned entirely to a single processor.

In contrast to the proof in~\cite{BSP_DAG_opdas} which reduces from the clique problem, our modified construction provides a reduction from the smallest $k$-edge subgraph problem: given an undirected graph $G_0(V_0, E_0)$ on $N=|V_0|$ nodes, our goal is to find a minimal set of nodes such that their induced subgraph contains at least $k$ edges. The problem is known to be NP-hard to approximate to an $N^{1/(\log\log n)^{\delta'}}$ factor for some constant $\delta'>0$, assuming the exponential time hypothesis~\cite{ETHhardness}.

For every edge $e_0 \! \in \! E_0$ of the input graph $G_0$, we create a block $(U_{e_0}^{(1)}, U_{e_0}^{(2)})$ with $|U_{e_0}^{(1)}| = M_1$, $|U_{e_0}^{(2)}| = M_2$. For every node $v_0 \! \in \! V_0$, we create a source node $v$ in our construction, and draw an edge from $v$ to an arbitrarily selected node of $|U_{e_0}^{(2)}|$ for each edges $e_0$ that is incident to $v_0$ in $G_0$. We also add two blocks $(U_1^{(1)}, U_1^{(2)})$ and $(U_2^{(1)}, U_2^{(2)})$ to the construction, with $|U_1^{(1)}|=|U_2^{(1)}|=M_1$, but $|U_1^{(2)}|=(k+1) \! \cdot \! M_2$ and $|U_2^{(2)}|=(|E_0|- k + 1) \! \cdot \! M_2$. For each representative $v$ of a node $v_0 \in V_0$, we also draw an edge from $v$ to an arbitrary node in $U_1^{(2)}$. Finally, we add $|E_0| \! \cdot \! M_1 + |V_0|$ isolated nodes to finish the construction. 

The intended BSP schedule for the DAG is as follows. We can assign $(U_i^{(1)}, U_i^{(2)})$ to processor $i$ for $i \! \in \! \{ 1,2 \}$. We can assign $|E_0|-k$ of the blocks $(U_{e_0}^{(1)}, U_{e_0}^{(2)})$ to processor $1$ and the remaining $k$ to processor $2$; this already ensures that the second blocks $U^{(2)}$ of the block gadget altogether contribute the same amount of nodes to both processors. We then assign all sources $v$ that correspond to a node $v_0 \! \in \! V_0$ to processor $1$, and if $v_0$ has an incident edge gadget assigned to processor $2$, then we also replicate $v$ on processor $2$. Assume there are $d$ nodes replicated on processor $2$ this way. Finally, we split the isolated nodes between the processors to ensure that the node assignment is as balanced as possible: we assign $k \cdot M_1 + \lceil \frac{d}{2} \rceil$ isolated nodes to processor $1$, and $(|E_0| - k) \cdot M_1 + |V_0| - \lceil \frac{d}{2} \rceil$ isolated nodes to processor $2$. This schedule consists of only a single superstep, requires no communication at all, and its surplus cost is $\lceil \frac{d}{2} \rceil$.

The appropriate choice of parameters ensures that any reasonable schedule is of the form above. Indeed, $M_1 > |V_0| +1 > d$ and $g>1$ ensures that splitting a sink group $U^{(2)}$, as discussed above, results in a higher surplus cost than the above schedule with any choice of $k$ edges. Let $M_2'=2 \cdot (|E_0| + 2) \cdot M_1 + 2 \cdot |V_0|$ denote the number of nodes in the construction besides the sink groups $U^{(2)}$. We then select $M_2 > M_2' + 2 \! \cdot \! |V_0| + 2$. This ensures that the sink groups $U^{(2)}$ altogether must add up to exactly $(|E_0|+1) \cdot M_2$ nodes on both processors; otherwise, the compute cost is at least $2 \cdot (|V_0|+1)$ higher on one of the processors, so we again have a surplus cost of at least $|V_0| + 1 > d$. Similarly, we cannot assign $U_1^{(2)}$ and $U_2^{(2)}$ to the same processor due to their large size: even if all other nodes are assigned to the other processor, this would have a surplus cost of $\frac{|M_2|}{2} > d$ at least. As such, the only way to balance the sink groups $U^{(2)}$ without splitting a group is to assign $U_1^{(2)}$ and $(|E_0|-k)$ edge gadgets to one processor (assume w.l.o.g. that this is processor $1$), and assign $U_2^{(2)}$ and the remaining $k$ edge gadgets to processor $2$.

This already implies that the value of all source vertices $v$ (that correspond to a $v_0 \in V_0$) must eventually be present on processor $1$, and if $d$ of them have incident edges $e_0$ with the corresponding gadget assigned to processor $2$, then these $d$ values must also be eventually present on processor $2$. Whether this happens with replication or communication, this results in a surplus cost of at least $\lceil \frac{d}{2} \rceil$. On the other hand, a surplus cost of only $\lceil \frac{d}{2} \rceil$ is always obtainable in a single superstep as discussed above, by keeping $U^{(1)}$ and $U^{(2)}$ on the same processor for all blocks, assigning all nodes $v$ to processor $1$ and replicating the $d$ required ones on processor $2$, and then assigning the appropriate number of isolated nodes to each processor to maintain work balance. This means that we can assume that all our schedules are in this form; if an approximation algorithm to BSP scheduling (with or without replication) returns a schedule that does not satisfy these properties, we can simply replace it by a schedule fulfilling these properties, and this will have even lower (surplus) cost.

This implies that if the optimum cost of the input smallest $k$-edge subgraph problem is $\text{OPT}_{SkES}$, then the corresponding BSP scheduling problem will have optimal surplus cost $\text{OPT}_{surp} = \lceil \frac{\text{OPT}_{SkES}}{2} \rceil$, and hence in particular, we have $\text{OPT}_{SkES} \geq \text{OPT}_{surp}$. Similarly, if an approximation algorithm for BSP scheduling returns a schedule with surplus cost $\text{SOL}_{surp}$, then by selecting the edge gadgets assigned to processor $2$, we can convert this into a smallest $k$-edge subgraph solution on at most $\text{SOL}_{SkES} \leq 2 \cdot \text{SOL}_{surp}$ nodes. Also note that with our choice of parameters, we have $M_1=O(N)$ and $M_2=O(N^3)$; the number of nodes in the construction is altogether $n=O(N^5)$, so there is a small constant $c$ such that $n \leq c \! \cdot \! N^5$.

Recall that $\delta'$ denotes the constant from the inapproximability result for smallest $k$-edge subgraph~\cite{ETHhardness}, and let us select a constant $\delta$ such that $\delta > \delta'$. Assume we have a polynomial-time approximation algorithm for the surplus cost in BSP scheduling to a $\frac{\text{SOL}_{surp}}{\text{OPT}_{surp}} \leq n^{1/(\log\log n)^{\delta}}$ factor. Then for the corresponding solution to smallest $k$-edge subgraph, we have that
\[ \frac{\text{SOL}_{SkES}}{\text{OPT}_{SkES}} \leq \frac{2 \cdot \text{SOL}_{surp}}{\text{OPT}_{surp}} \leq 2 \cdot n^{1/(\log\log n)^{\delta}} \, . \]
With $n \leq c \! \cdot \! N^5$ and $N \leq n$, the right-hand side can be further upper bounded by $2 \! \cdot \! c \! \cdot \! N^{5/(\log\log N)^{\delta}}$. Since $\delta > \delta'$, there exists an $N$ large enough such that $2 \! \cdot \! c \! \cdot \! N^{5/(\log\log N)^{\delta}} < N^{1/(\log\log N)^{\delta'}}$. As such, we have obtained an $N^{1/(\log\log N)^{\delta'}}$-factor approximation algorithm for the smallest $k$-edge subgraph problem. This completes the reduction.

If we instead consider the BSP scheduling problem without replication, then we only need to replace replication steps by communication steps in the proof to hold in an identical way. Specifically, let us now select $g=1$ and $L=0$. The source nodes that needed to be replicated before now need to be communicated to the other processor. A reasonable solution distributes these in a balanced way among the two processors, so the single communication phase incurs a cost of $g \! \cdot \! \lceil \frac{d}{2} \rceil = \lceil \frac{d}{2} \rceil$, which is the same surplus cost as before. With this, the arguments above all carry over to the non-replicating case, too.

\section{Benchmarks for the experiments} \label{app:datasets}

We now discuss the experiment benchmarks in more detail.

\subsection{For partitioning}

\subsubsection*{MoE hypergraphs}

The hypergraph benchmark for MoE optimization is novel in this paper. We use open-source dataset of~\cite{moe_data}, which profiles the expert usage of modern MoE frameworks on several benchmarks. The dataset offers two Mixture-of-Experts models: Qwen (FP8 quantized Qwen3-235B) on 128 experts per layer, and DeepSeek (4 bit quantized Deepseek R1 using AWQ) on 256 experts per layer. From the datasets, we use the Massive Multitask Language Understanding (MMLU) benchmark, which measures the capabilities of the LLMs in 57 different subjects, ranging from natural sciences to social sciences and humanities. For both models, data is available for both the prefill and the decode phase; we focus on the data from the decode phase, since the optimization of this is often more relevant. The benchmark altogether provides the expert access pattern of more than 1.5 million tokens for DeepSeek and almost 1.8 million tokens for Qwen, thus allowing to build rather accurate hypergraph models of expert usage.

We form hypergraphs in the \texttt{moe-8} dataset as follows. We consider the $8$-tuples of experts that each of the 1.5+ millions of tokens invokes on a specific layer, and we select the $8$-tuples that appear most frequently. We select a pin limit parameter $\kappa_0$, and we find the smallest integer $f$ such that there are at least $\frac{\kappa_0}{8}$ distinct $8$-tuples appearing at least $f$ times. We create a hyperedge for each $8$-tuple that appears at least $f$ times; this ensures that the number of pins in the hypergraph is $\kappa_0$ or only slightly above. For the weight of each hyperedge, we consider the frequency of the given $8$-tuple and normalize it into the range $[1, 10]$, with values below $1$ rounded up to $1$. On the other hand, the weights of the nodes are uniformly $1$ here.

For the \texttt{moe-2} dataset, we follow the same procedure, but instead of the $8$-tuples, we consider all the subsets of size $2$ in each $8$-tuple for the given layer. The interpretation of the pin limit $\kappa_0$ is adjusted accordingly: we now find the smallest integer $f$ such that there are at least $\frac{\kappa_0}{2}$ distinct pairs appearing at least $f$ times. Focusing on subsets of size $2$ actually creates a simple graph, which is a special case of a hypergraph; as such, this also provides some insight into the relative hardness of graph partitioning to hypergraph partitioning for our ILP solver. Note that the \texttt{moe-2} dataset is also motivated by many practical approaches in this area, which also often choose to focus only on pairs of frequently co-occurring experts instead of larger subsets~\cite{moetuner}.

With the methodology above, our hypergraphs will often contain isolated nodes when a specific experts does not appear in any of the $f$ most frequent hyperedges. Since these isolated nodes might make the partitioning problem substantially easier, we remove them to obtain connected hypergraphs for our experiments. However, we point out that of course in a practical application of the method, these experts also need to be placed on some device. The accurate modeling of this setting would ideally require a more sophisticated hypergraph model where the connection of these nodes to the rest of the experts is also represented, but without resulting a significant increase of the number of hyperedges or pins. We leave the exploration of these more advanced hypergraph models to domain-specific future work.

Our goal is hence to select a relatively small $\kappa_0$ so that the number of pins is similar to the other dataset, but we still want this to cover a high portion of the original experts. Our analysis of the data shows that the first layers of the LLMs are much more suitable for this: the same choice of $\kappa_0$ covers much more experts here. As such, for both Qwen and DeepSeek, we select the first $5$ layers of the model to form a hypergraph in our test set. In Qwen, this simply means the layers indexed $0$-$4$. In DeepSeek, layers $0$-$2$ do not actually contain experts yet, so here the first $5$ expert layers actually corresponds to layers $3$--$7$; to avoid confusion, we also re-index these to the range $0$-$4$.

For our hypergraphs, we choose $\kappa_0=1000$. For the case of Qwen, this already ensures that in each of the first $5$ layers, even with the restriction to pairs in \texttt{moe-8}, we cover at least $80$ of the $128$ experts. Note compared to the smaller SpMV hypergraphs, this generally produces instances with a smaller number of nodes but a higher number of pins. For DeepSeek, we use the same choice of $\kappa_0=1000$, which produces hypergraphs with a similar number of pins, but more nodes and somewhat smaller average degree. We point out that we also experimented with $\kappa_0=2000$ with DeepSeek in order to obtain some even larger hypergraphs; however, the ILP solver struggled to find reasonable solutions for these within our $5$-hour time limit.

The properties of the instances are also summarized in Table~\ref{tab:moe_inst}. Note that the primary aspect for the sorting of the table rows is not the dataset that that the instance is contained in (\texttt{moe-8} or \texttt{moe-2}), but the LLM model where it originates from.

\begin{table}[t]
\caption{Properties of MoE hypergraphs: name, model ($8$-tuples or pairs), and number of nodes, hyperedges, pins.}
  \begin{minipage}[b]{0.48\textwidth}
  \centering
    \renewcommand{\arraystretch}{1.35}
    \begin{tabular}{c | c || c | c| c |}
     name & model & nr. nodes & \makecell{nr. hyper- \\ edges} & nr. pins \\ 
     \hline\hline
     Qwen\_l0 & \texttt{moe-8} & 109 & 125 & 1000 \\
     \hline
     Qwen\_l1 & \texttt{moe-8} & 94 & 125 & 1000 \\ 
     \hline
     Qwen\_l2 & \texttt{moe-8} & 88 & 125 & 1000 \\ 
     \hline
     Qwen\_l3 & \texttt{moe-8} & 81 & 126 & 1008 \\ 
     \hline
     Qwen\_l4 & \texttt{moe-8} & 86 & 126 & 1008 \\ 
     \hline
     Qwen\_l0 & \texttt{moe-2} & 97 & 500 & 1000 \\ 
     \hline
     Qwen\_l1 & \texttt{moe-2} & 92 & 500 & 1000 \\ 
     \hline
     Qwen\_l2 & \texttt{moe-2} & 84 & 500 & 1000 \\ 
     \hline
     Qwen\_l3 & \texttt{moe-2} & 80 & 501 & 1002 \\ 
     \hline
     Qwen\_l4 & \texttt{moe-2} & 80 & 500 & 1000 \\
     \hline
     \hline
     DeepSeek\_l0 & \texttt{moe-8} & 187 & 125 & 1000 \\
     \hline
     DeepSeek\_l1 & \texttt{moe-8} & 187 & 126 & 1008 \\
     \hline
     DeepSeek\_l2 & \texttt{moe-8} & 201 & 125 & 1000 \\
     \hline
     DeepSeek\_l3 & \texttt{moe-8} & 215 & 126 & 1008 \\
     \hline
     DeepSeek\_l4 & \texttt{moe-8} & 212 & 125 & 1000 \\
     \hline
     DeepSeek\_l0 & \texttt{moe-2} & 149 & 500 & 1000 \\
     \hline
     DeepSeek\_l1 & \texttt{moe-2} & 141 & 500 & 1000 \\
     \hline
     DeepSeek\_l2 & \texttt{moe-2} & 165 & 500 & 1000 \\
     \hline
     DeepSeek\_l3 & \texttt{moe-2} & 161 & 502 & 1004 \\
     \hline
     DeepSeek\_l4 & \texttt{moe-2} & 199 & 500 & 1000 \\
     \hline
    \end{tabular}
  \end{minipage}
  \hfill
  \label{tab:moe_inst}
\end{table}

\subsubsection*{SpMV hypergraphs}

For SpMV, we use the SuiteSparse matrix collection~\cite{davis2011university}, a popular benchmark for sparse matrix computations, available at \url{https://sparse.tamu.edu/}. For both the fine-grained and the row-net case, we select 10 matrices of varying size ranging from $90$ to $500$ nodes. Our goal was to select matrices that originate from a diverse set of application areas, while also filtering out matrices with heavily structured nonzero patterns where the best partitioning would be trivial to find. We also made sure to include some of the larger matrices from the experiments of~\cite{jenneskens22}.

In the fine-grained model, each node represents a non-zero, and as such their compute weight is uniformly $1$. In the row-net model, the weight of each vertex (corresponding to a column) is defined as the number of non-zero entries in the given column~\cite{catalyurek2002hypergraph}. The weight of each hyperedge is uniformly $1$ in both of the models.

The properties of the instances are summarized in Table~\ref{tab:spmv_inst}.

\begin{table}[t]
\caption{Properties of SpMV hypergraphs: name, model (Fine-Grained or Row-Net), and number of nodes, hyperedges, pins.}
  \begin{minipage}[b]{0.48\textwidth}
  \centering
    \renewcommand{\arraystretch}{1.35}
    \begin{tabular}{c | c || c | c| c |}
     name & model & nr. nodes & \makecell{nr. hyper- \\ edges} & nr. pins \\ 
     \hline\hline
     can\_24 & \texttt{spmv-fg} & 92 & 48 & 184 \\ 
     \hline
     bcspwr02 & \texttt{spmv-fg} & 108 & 98 & 216 \\
     \hline
     refine & \texttt{spmv-fg} & 153 & 91 & 306 \\ 
     \hline
     mesh1e1 & \texttt{spmv-fg} & 177 & 96 & 354 \\ 
     \hline
     d\_dyn & \texttt{spmv-fg} & 238 & 174 & 476 \\ 
     \hline
     bcspwr03 & \texttt{spmv-fg} & 297 & 236 & 594 \\ 
     \hline
     west0156 & \texttt{spmv-fg} & 371 & 312 & 742 \\ 
     \hline
     impcol\_c & \texttt{spmv-fg} & 411 & 274 & 822 \\ 
     \hline
     west0132 & \texttt{spmv-fg} & 414 & 264 & 828 \\
     \hline
     bfwa62 & \texttt{spmv-fg} & 450 & 124 & 900 \\ 
     \hline
     \hline
     bcspwr03 & \texttt{spmv-rn} & 118 & 118 & 297 \\
     \hline
     impcol\_c & \texttt{spmv-rn} & 137 & 137 & 411 \\
     \hline
     west0156 & \texttt{spmv-rn} & 156 & 156 & 371 \\
     \hline
     rajat14 & \texttt{spmv-rn} & 180 & 180 & 1503 \\
     \hline
     impcol\_e & \texttt{spmv-rn} & 225 & 225 & 1308 \\
     \hline
     lshp\_265 & \texttt{spmv-rn} & 265 & 265 & 1009 \\
     \hline
     bcspwr04 & \texttt{spmv-rn} & 274 & 274 & 973 \\
     \hline
     lp\_lotfi & \texttt{spmv-rn} & 366 & 153 & 1136 \\
     \hline
     west0479 & \texttt{spmv-rn} & 479 & 479 & 1910 \\
     \hline
     494\_bus & \texttt{spmv-rn} & 494 & 494 & 1080 \\
     \hline
    \end{tabular}
  \end{minipage}
  \hfill
  \label{tab:spmv_inst}
\end{table}

\subsection{For scheduling}

\subsubsection*{The \texttt{hdb} dataset}

The \texttt{hdb} dataset uses the computational DAG database introduced by Papp et al. for studying scheduling algorithms in the BSP model~\cite{BSP_algos_opdas}. The dataset consists of 5 smaller datasets, named tiny (40-80 nodes), small (250-600 nodes), medium (1000-2000 nodes), large (5000-10000 nodes) and huge (50000-100000 nodes). For the \texttt{hdb} dataset, we combine the medium, large and huge datasets, thus altogether obtaining $52$ computational DAGs with the number of nodes between $1000$--$100000$. The dataset is available at \url{https://github.com/Algebraic-Programming/Artifacts/tree/master/SPAA_2024_Efficient_Multi-Processor_Scheduling/database/used_for_experiments/test_set}.

Most of the computational DAGs in this dataset are derived from a fine-grained representation of four fundamental computations (SpMV, conjugate gradient, $k$-nearest neighbors, iterated matrix multiplication) in algebraic form, using a matrix with a randomized non-zero pattern. There are also a few DAGs that correspond to a coarse-grained representation of computations from similar areas. The DAGs in the dataset also contain node weights for the $4$ fine-grained groups; for the few coarse-grained graphs, these are uniformly $1$. 
We summarize the number of nodes and directed edges in each of these DAGs in Table~\ref{tab:hdb_inst}.

\begin{table}[t]
\caption{Properties of computational DAGs in the \texttt{hdb} dataset: name, as well as number of nodes and edges. The original names in the dataset are replaced by short identifiers for the sake of brevity.}
    \hspace{-0.03\textwidth}
    \begin{minipage}{0.2\textwidth}
        \centering
        \renewcommand{\arraystretch}{1.35}
        \vspace{-14pt}
        \begin{tabular}{c | c | c |}
            DAG & \makecell{nr.\\$\!\!$nodes$\!\!$} & \makecell{nr. \\$\!\!$edges$\!\!$} \\ 
         \hline\hline
          $\!$exp\_N35\_K4$\!$ & 1020 & $\!$2028$\!$ \\ \hline
          $\!$CG\_N12\_K6$\!$ & 1067 & $\!$2179$\!$ \\ \hline
          $\!\!$kNN\_N40\_K5$\!$ & 1073 & $\!$2007$\!$ \\ \hline
          $\!$exp\_N30\_K6$\!$ & 1085 & $\!$2268$\!$ \\ \hline
          $\!$CG\_N9\_K9$\!$ & 1099 & $\!$2242$\!$ \\ \hline
          $\!$spmv\_N60$\!$ & 1170 & $\!$1575$\!$ \\ \hline
          $\!$kNN\_N30\_K8$\!$ & 1219 & $\!$2601$\!$ \\ \hline
          $\!$exp\_N30\_K8$\!$ & 1395 & $\!$3024$\!$ \\ \hline
          $\!$CG\_N15\_K7$\!$ & 1512 & $\!$3116$\!$ \\ \hline
          $\!$CG\_N12\_K10$\!$ & 1527 & $\!$3089$\!$ \\ \hline
          $\!\!$kNN\_N50\_K4$\!$ & 1541 & $\!$2763$\!$ \\ \hline
          $\!\!\!$kNN\_N30\_K10$\!\!$ & 1585 & $\!$3519$\!$ \\ \hline
          $\!$spmv\_N65$\!$ & 1604 & $\!$2211$\!$ \\ \hline
          $\!$exp\_N40\_K5$\!$ & 1632 & $\!$3480$\!$ \\ \hline
          $\!$CG\_N21\_K5$\!$ & 1888 & $\!$3996$\!$ \\ \hline
          $\!$kNN\_N50\_K5$\!$ & 1896 & $\!$3801$\!$ \\ \hline
          $\!$spmv\_N70$\!$ & 1920 & $\!$2670$\!$ \\ \hline
          $\!$exp\_N44\_K5$\!$ & 1932 & $\!$4170$\!$ \\ \hline
          $\!$exp\_N30\_K10$\!$ & 1950 & $\!$4443$\!$ \\ \hline
          $\!\!\!$kNN\_N30\_K12$\!\!$ & 1951 & $\!$4437$\!$ \\ \hline
          $\!$CG\_N17\_K8$\!$ & 1976 & $\!$4134$\!$ \\ \hline
          $\!\!\!$kNN\_N70\_K7$\!\!$ & 5228 & $\!\!$11991$\!\!$ \\ \hline
          $\!$spmv\_N120$\!$ & 5326 & $\!$7629$\!$ \\ \hline
          $\!\!\!$kNN\_N45\_K15$\!\!$ & 5392 & $\!\!$13275$\!\!$ \\ \hline
          $\!$CG\_N35\_K8$\!$ & 5396 & $\!\!$12153$\!\!$ \\ \hline
          $\!$exp\_N50\_K12$\!$ & 5434 & $\!\!$13248$\!\!$ \\ \hline  
        \end{tabular}
    \end{minipage}
    \hspace{0.03\textwidth}
    \begin{minipage}{0.2\textwidth}
        \centering
        \renewcommand{\arraystretch}{1.35}
        \begin{tabular}{c | c | c |}
            DAG & \makecell{nr.\\$\!\!$nodes$\!\!$} & \makecell{nr. \\$\!\!$edges$\!\!$} \\ 
         \hline\hline
          $\!\!$exp\_N70\_K5$\!\!$ & $\!$5550$\!$ & $\!$12825$\!$ \\ \hline
          $\!\!$CG\_N25\_K17$\!\!$ & $\!$5658$\!$ & $\!$12072$\!$ \\ \hline
          $\!\!\!$kNN\_N55\_K14$\!\!$ & $\!$6941$\!$ & $\!$17244$\!$ \\ \hline
          $\!\!$exp\_N80\_K6$\!\!$ & $\!$7056$\!$ & $\!$16704$\!$ \\ \hline
          $\!\!$exp\_N55\_K14$\!\!$ & $\!$7230$\!$ & $\!$17934$\!$ \\ \hline
          $\!\!$CG\_N40\_K10$\!\!$ & $\!$7335$\!$ & $\!$16632$\!$ \\ \hline
          $\!\!$CG\_N24\_K22$\!\!$ & $\!$7383$\!$ & $\!$16145$\!$ \\ \hline
          $\!\!$spmv\_N130$\!\!$ & $\!$7482$\!$ & $\!$10833$\!$ \\ \hline
          $\!\!$kNN\_N75\_K9$\!\!$ & $\!$7550$\!$ & $\!$18231$\!$ \\ \hline
          $\!\!\!$kNN\_N60\_K15$\!\!$ & $\!$8998$\!$ & $\!$22722$\!$ \\ \hline
          $\!\!$spmv\_N150$\!\!$ & $\!$9088$\!$ & $\!$13182$\!$ \\ \hline
          $\!\!$kNN\_N90\_K8$\!\!$ & $\!$9108$\!$ & $\!$21717$\!$ \\ \hline
          $\!\!$exp\_N60\_K17$\!\!$ & $\!$9125$\!$ & $\!$22821$\!$ \\ \hline
          $\!\!$exp\_N90\_K7$\!\!$ & $\!$9632$\!$ & $\!$23394$\!$ \\ \hline
          $\!\!$CG\_N45\_K13$\!\!$ & $\!$9786$\!$ & $\!$22088$\!$ \\ \hline
          $\!\!$CG\_N30\_K24$\!\!$ & $\!$9941$\!$ & $\!$21754$\!$ \\ \hline
          $\!\!$cg\_gyro\_m$\!\!$ & $\!\!$47023$\!\!$ & $\!$70893$\!$ \\ \hline
          $\!\!\!$bicgstab\_gyro$\!\!\!$ & $\!\!$50195$\!\!$ & $\!$80373$\!$ \\ \hline
          $\!\!$exp\_N75\_K60$\!\!$ & $\!\!$54229$\!\!$ & $\!\!$146520$\!\!$ \\ \hline
          $\!\!$spmv\_N300$\!\!$ & $\!\!$54300$\!\!$ & $\!$80550$\!$ \\ \hline
          $\!\!$CG\_N60\_K55$\!\!$ & $\!\!$54324$\!\!$ & $\!\!$127658$\!\!$ \\ \hline
          $\!\!\!$kNN\_N100\_K50$\!\!$ & $\!\!$56521$\!\!$ & $\!\!$151731$\!\!$ \\ \hline
          hpcg & $\!\!$72685$\!\!$ & $\!\!$193610$\!\!$ \\ \hline
          $\!\!$exp\_N100\_K40$\!\!$ & $\!\!$83394$\!\!$ & $\!\!$232080$\!\!$ \\ \hline
          $\!\!\!$kNN\_N250\_K15$\!\!$ & $\!\!$94667$\!\!$ & $\!\!$253986$\!\!$ \\ \hline
          $\!\!$CG\_N120\_K35$\!\!$ & $\!\!$96685$\!\!$ & $\!\!$241842$\!\!$ \\ \hline 
        \end{tabular}
        \vspace{14pt}
    \end{minipage}
  \label{tab:hdb_inst}
\end{table}

To briefly analyze the relation of the heuristic costs to optimality, our scheduling ILP experiments for also consider the so-called tiny dataset from~\cite{BSP_algos_opdas}, which contains $16$ computational DAGs of $40$--$80$ nodes, with very similar structure and origin to \texttt{hdb}.

\subsubsection*{The \texttt{psdd} dataset}

The \texttt{psdd} dataset originates from the Circuit Model Zoo of the UCLA Star AI Lab; Shah et al. use this dataset for experiments in their recent work on DAG-specific compilation~\cite{shah2022dpu}. The dataset is available at \url{https://github.com/UCLA-StarAI/Circuit-Model-Zoo/tree/master/psdds}. Originally, the dataset also contains some very small example DAGs; we remove these, restricting ourselves to DAGs above $1000$ nodes. This results in a dataset of $29$ instances; $26$ of these are between $2686$ and $40034$ nodes, and there are three larger instances with $63488$, $89389$ and $174135$ nodes, respectively. The properties of these computational DAGs are summarized in Table~\ref{tab:psdd_inst}.

\begin{table}[t]
\caption{Properties of computational DAGs in the \texttt{psdd} dataset: name, number of nodes and edges in the DAG.}
    \begin{minipage}{0.2\textwidth}
        \centering
        \renewcommand{\arraystretch}{1.35}
        \begin{tabular}{c | c | c |}
            DAG & \makecell{nr.\\$\!\!$nodes$\!\!$} & \makecell{nr. \\$\!\!$edges$\!\!$} \\ 
         \hline\hline
          kdd-6k & 2686 & 6527 \\
         \hline
          tretail & 2822 & 6382 \\
         \hline
          $\!\!$elevators$\!$ & 3331 & 6892 \\
         \hline
          \makecell{msnbc-\\yitao-d} & 4240 & 6908 \\
         \hline
          nltcs & 4676 & 10592 \\
         \hline
          mnist & 4904 & 5916 \\
         \hline
          \makecell{mnist-\\antonio} & 4904 & 5916 \\
         \hline
          \makecell{msnbc-\\yitao-c} & 4989 & 7164 \\
         \hline
          \makecell{msnbc-\\yitao-e} & 5216 & 8074 \\
         \hline
          \makecell{msnbc-\\yitao-b} & 6590 & 8870 \\
         \hline
          \makecell{msnbc-\\yitao-a} & 6682 & 8926 \\
         \hline
          $\!\!$insurance$\!$ & $\!$14342$\!$ & $\!$30060$\!$ \\
         \hline
          bnetflix & $\!$15340$\!$ & $\!$40736$\!$ \\
         \hline
         msweb & $\!$15777$\!$ & $\!$38289$\!$ \\
         \hline
        \end{tabular}
    \end{minipage}
    \hspace{0.04\textwidth}
    \begin{minipage}{0.2\textwidth}
        \centering
        \renewcommand{\arraystretch}{1.35}
        \vspace{-6pt}
        \begin{tabular}{c | c | c |}
            DAG & \makecell{nr.\\ nodes} & \makecell{nr. \\ edges} \\ 
         \hline\hline
          msnbc & 16373 & 39056 \\
         \hline
          jester & 18957 & 48220 \\
         \hline
          book & 22118 & 53647 \\
         \hline
          ad & 22277 & 47345 \\
         \hline
          cr52 & 22295 & 46968 \\
         \hline
          cwebkb & 22821 & 50251 \\
         \hline
          c20ng & 25696 & 56622 \\
         \hline
          bbc & 26023 & 54018 \\
         \hline
          kdd & 32021 & 79972 \\
         \hline
          baudio & 34567 & 94229 \\
         \hline
          $\!\!$accidents$\!$ & 37486 & $\!$102111$\!$ \\
         \hline
          \makecell{pumsb\_\\star} & 40034 & $\!$110115$\!$ \\
         \hline
          plants & 63488 & $\!$167486$\!$ \\
         \hline
          tmovie & 89389 & $\!$274214$\!$ \\
         \hline
          dna-500 & $\!$174135$\!$ & $\!$552352$\!$ \\
         \hline
        \end{tabular}
        \vspace{14pt}
    \end{minipage}
  \label{tab:psdd_inst}
\end{table}

\subsubsection*{The \texttt{sptrsv} dataset}

The task of efficiently parallelizing sparse triangular system solving (SpTRSV) can be naturally captured as a DAG scheduling problem, where each node represents a row of the triangular matrix, and each directed edge corresponds to a dependency that is established by a non-zero entry~\cite{zarebavani2022hdagg}. We again turn to the SuiteSparse matrix collection~\cite{davis2011university} at \url{https://sparse.tamu.edu/} to build the \texttt{sptrsv} dataset from such computations. We select $40$ symmetric square matrices with $1000$-$200000$ rows, mostly corresponding to application areas. We consider the lower triangular part of each of these matrices, and construct a computational DAG that represents the SpTRSV problem on the corresponding matrix. This gives us $40$ computational DAGs; $38$ of these are between $1074$ and $102158$ nodes, whereas the two largest ones have $141347$ and $146689$ nodes, respectively. The fundamental properties of these computational DAGs are summarized in Table~\ref{tab:sptrsv_inst}.

\begin{table}[t]
\caption{Properties of computational DAGs in the \texttt{sptrsv} dataset: matrix name, number of nodes and edges in the DAG.}
\hspace{-0.02\textwidth}
    \begin{minipage}{0.2\textwidth}
        \centering
        \renewcommand{\arraystretch}{1.35}
        \begin{tabular}{c | c | c |}
            matrix & \makecell{nr.\\$\!\!$nodes$\!\!$} & \makecell{nr. \\ edges} \\ 
         \hline\hline
         $\!\!$bcsstk08$\!$ & 1074 & 5943 \\
         \hline
         $\!\!$1138\_bus$\!$ & 1138 & 1458 \\
         \hline
         $\!\!$msc01440$\!$ & 1440 & 22415 \\
         \hline
         ex3 & 1821 & 25432 \\
         \hline
         $\!\!$nasa1824$\!$ & 1824 & 18692 \\
         \hline
         $\!\!$bcsstk26$\!$ & 1922 & 14207 \\
         \hline
         $\!\!$bcsstk13$\!$ & 2003 & 40940 \\
         \hline
         $\!\!$nasa2910$\!$ & 2910 & 85693 \\
         \hline
         $\!\!$bcsstk23$\!$ & 3134 & 21022 \\
         \hline
         $\!\!$bcsstk15$\!$ & 3948 & 56934 \\
         \hline
         $\!\!$sts4098$\!$ & 4098 & 34129 \\
         \hline
         $\!\!$bcsstk28$\!$ & 4410 & 107307 \\
         \hline
         $\!\!$nasa4704$\!$ & 4704 & 50026 \\
         \hline
         $\!\!\!$s3rmt3m3$\!$ & 5357 & 101169 \\
         \hline
         $\!\!$bcsstk38$\!$ & 8032 & 173714 \\
         \hline
         nd3k & 9000 & $\!\!$1635345$\!\!$ \\
         \hline
         $\!\!$msc10848$\!$ & $\!$10848$\!$ & 609465 \\
         \hline
         \makecell{Pres\_\\Poisson} & $\!$14822$\!$ & 350491 \\
         \hline
         $\!\!\!$Dubcova1$\!$ & $\!$16129$\!$ & 118440 \\
         \hline
         $\!\!$gyro\_m & $\!$17361$\!$ & 161535 \\
         \hline
         $\!\!$bodyy4 & $\!$17546$\!$ & 52196 \\
         \hline
        \end{tabular}
    \end{minipage}
    \hspace{0.04\textwidth}
    \begin{minipage}{0.2\textwidth}
        \centering
        \renewcommand{\arraystretch}{1.35}
        \begin{tabular}{c | c | c |}
            matrix & \makecell{nr.\\ nodes} & \makecell{nr. \\ edges} \\ 
         \hline\hline
         $\!\!$raefsky4$\!$ & 19779 & 654416 \\
         \hline
         $\!\!$bcsstk36$\!$ & 23052 & 560044 \\
         \hline
         $\!\!\!$msc23052$\!$ & 23052 & 565881 \\
         \hline
         smt & 25710 & $\!$1863737$\!$ \\
         \hline
         thread & 29736 & $\!$2220156$\!$ \\
         \hline
         $\!\!$ship\_001$\!$ & 34920 & $\!$2304655$\!$ \\
         \hline
         $\!\!$pdb1HYS$\!$ & 36417 & $\!$2154174$\!$ \\
         \hline
         $\!\!$cvxbqp1$\!$ & 50000 & 149984 \\
         \hline
         $\!\!$ct20stif$\!$ & 52329 & $\!$1323067$\!$ \\
         \hline
         $\!\!$Andrews$\!$ & 60000 & 350077 \\
         \hline
         $\!\!\!$Dubcova2$\!$ & 65025 & 482600 \\
         \hline
         cfd1 & 75656 & 878854 \\
         \hline
         \makecell{$\!\!$shallow\_$\!$\\water1} & 81920 & 122880 \\
         \hline
         thermal1 & 82654 & 245902 \\
         \hline
         m\_t1 & 97578 & $\!$4827996$\!$ \\
         \hline
         \makecell{$\!\!$2cubes\_\\sphere} & $\!$101492$\!$ & 772886 \\
         \hline
         \makecell{$\!\!$thermo-\\$\!\!$mech\_TC$\!$} & $\!$102158$\!$ & 304700 \\
         \hline
         $\!\!\!$bmw7st\_1$\!$ & $\!$141347$\!$ & $\!$3599160$\!$ \\
         \hline
         $\!\!\!$Dubcova3$\!$ & $\!$146689$\!$ & $\!$1744980$\!$ \\
         \hline
        \end{tabular}
        \vspace{14pt}
    \end{minipage}
  \label{tab:sptrsv_inst}
\end{table}

\section{Experimental results} \label{app:exp}

This section discusses further details of the experimental results.

\subsection{Partitioning results}

The improvements achieved via replication on our partitioning benchmarks are already discussed in the main part of the paper. Here we discuss some further technical details.

\subsubsection{Main experiments}

We emphasize again that for the ILP solver, the $P=4$ case is drastically more challenging than $P=2$; while for $P=2$ all ILPs (even replicating ones) are solved to optimality, often in a matter of minutes, for $P=4$, the replicating ILPs can often not find the optimum within the $5$-hour time limit.

We also note that in some of the hardest instances for $P=4$, the replicating ILPs actually do not find a better solution than that of the non-replicating optimum. In these cases, we considered the improvement ratio to be $1$, since this is clearly an upper bound on the actual ratio. A natural option here for future is work is to actually initialize the replicating ILPs with an (optimal) non-replicating partition; this would limit the ratio to $1$ inherently, and possibly also help guide the ILP solver to a better replicating solution. Similarly, even for the non-replicating ILP, initializing with a partitioning heuristic could be a useful way to further reduce the running time of the solver.

Recall that we have $20$ MoE related hypergraphs in our dataset, $5+5$ from Qwen and DeepSeek in both the \texttt{moe-2} and the \texttt{moe-8} datasets. It is also a natural idea to sort these into $2$ datasets of size $10$ not according to the hyperedge sizes, but the origin MoE model. Alternatively, this can also be understood as splitting the dataset into two parts according to the number of nodes. In this case, for $P=4$ we get that the mean reduction is $31.32\%$ on the Qwen hypergraphs and $35.05\%$ on the DeepSeek hypergraphs. As such, the hyperedge size indeed seems to have a stronger influence on the reduction numbers we obtain.

\subsubsection{ILP/D and ILP/R}

Table~\ref{tab:part_ilps} shows the number of cases where our two replicating ILP formulations outperform each other. The table suggests that neither of them is clearly superior: ILP/R is better in general, but it is still consistently weaker than ILP/D on \texttt{moe-8}. Hence the ideal choice of ILP might also depend on the instance properties.

\begin{table}[t]
\caption{Comparison of two ILP formulations for partitioning with replication, for $P=4$ and $\varepsilon=0.05$. For each dataset, we show the number of instances where the lower cost solution is found by ILP/D, or ILP/R, or where the costs are equal. \vspace{-8pt}}
  \begin{minipage}[b]{0.48\textwidth}
  \centering
    \renewcommand{\arraystretch}{1.45}
    \begin{tabular}{c || c | c | c | c | }
      & $\!\!$\texttt{spmv-fg}$\!\!$ & $\!\!$\texttt{spmv-rn}$\!\!$ & \texttt{moe-8} & \texttt{moe-2} \\ 
     \hline\hline
     same cost & 7 & 6 & 4 & 2 \\
     \hline
     $\!$ILP/D better$\!$ & 0 & 1 & 6 & 2 \\
     \hline
     $\!$ILP/R better$\!$ & 3 & 3 & 0 & 6 \\
     \hline
    \end{tabular}
  \end{minipage}
  \hfill
  \label{tab:part_ilps}
\end{table}

We also summarize in Table~\ref{tab:part_opt} the number of times each ILP problem is solved to optimality (out of the $10$ instances in each dataset). This confirms, firstly, that the partitioning problems with replication are indeed significantly more challenging for the ILP solver, as expected. It also shows that the second version of the replication ILP (presented in Section~\ref{sec:ilp_repl}) is generally superior to the first version (Section~\ref{sec:ilp_dupl}), especially since it also does not impose further restrictions on the number of times a node is replicated.

\begin{table}[t]
\caption{Table showing how many times the partitioning ILPs find the optimal solution within the time limit, out of the 10 instances of each dataset, with the base (non-replicating) ILP and the two replicating ILP formulations, for $P=4$ and $\varepsilon=0.05$. \vspace{-8pt}}
  \begin{minipage}[b]{0.48\textwidth}
  \centering
    \renewcommand{\arraystretch}{1.45}
    \begin{tabular}{c || c | c | c | c | }
     & $\!\!$\texttt{spmv-fg}$\!\!$ & $\!\!$\texttt{spmv-rn}$\!\!$ & \texttt{moe-8} & \texttt{moe-2} \\  
     \hline\hline
     $\!$Base ILP$\!$ & 8 & 9 & 8 & 10 \\
     \hline
     ILP/D & 3 & 0 & 0 & 0 \\
     \hline
     ILP/R & 6 & 6 & 0 & 0 \\
     \hline
    \end{tabular}
  \end{minipage}
  \hfill
  \label{tab:part_opt}
\end{table}

Our two ILP formulations with replication can also be compared in more detail based on Figure~\ref{fig:ilps_compare}, which shows the cost reduction factor with both ILPs (with the $5$ hour runtime limit) for each instance. The figure again confirms that ILP/R is generally superior, but not always: it consistently outperformed by ILP/D on the \texttt{moe-8} dataset.

\begin{figure*}[t]
    \centering
    \includegraphics[scale=0.6]{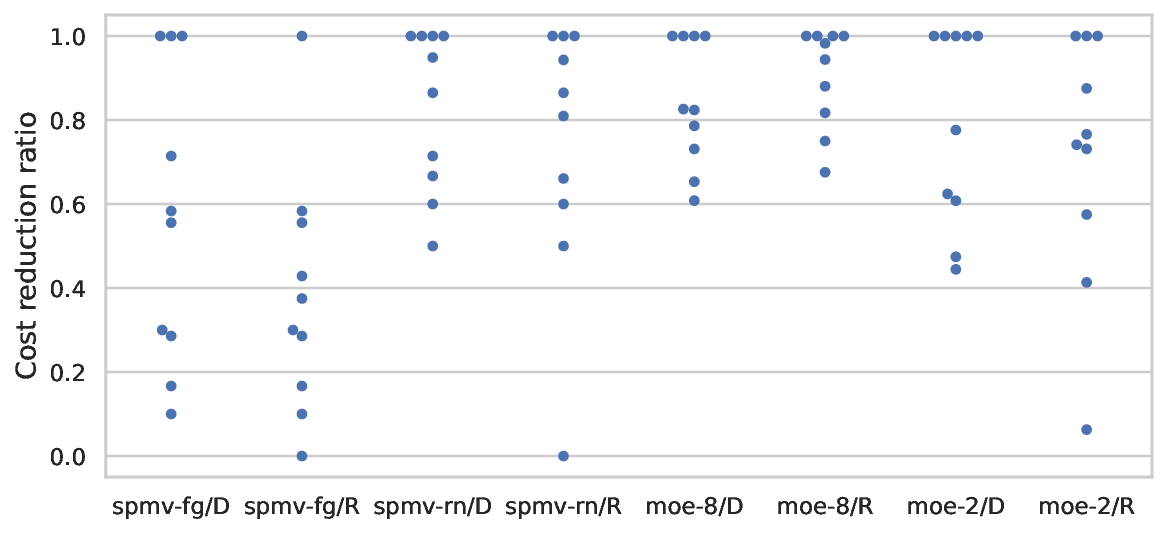}
    \vspace{-10pt}
    \caption{Illustration cost reduction ratio with replication on the different datasets for $P=4$ and $\varepsilon=0.05$. The `/D' and `/R' suffix corresponds to ILP/D and ILP/R, outlined in Sections \ref{sec:ilp_dupl} and \ref{sec:ilp_repl}, respectively.}
    \label{fig:ilps_compare}
\end{figure*}

\subsubsection{The impact of $P$ and $\varepsilon$}

For studying the impact of the parameters $P$ and $\varepsilon$, we build a smaller ablation set of $12$ hypergraphs in order to reduce the total running time of the experiments with the ILP solver. We select $3$ instances from each dataset; specifically, mesh1e1, d\_dyn and west0132 from \texttt{spmv-fg}, rajat14, impcol\_e and bcspwr04 from \texttt{spmv-rn}, and DeepSeek\_l2, DeepSeek\_l3 and DeepSeek\_l4 from \texttt{moe-8} and \texttt{moe-2}. The goal with this selection was to ensure that we get meaningful results on the same instances with a wide range of parameter combinations. Specifically, in \texttt{spmv-fg} and \texttt{spmv-rn}, we avoided the instances that already had cost $0$ with replication for $P=4$ and $\varepsilon=0.05$, since these are also separable for larger $\varepsilon$. Furthermore, we restricted ourselves to instances of at least $160$ nodes; with uniform node weights, this is the smallest $n$ which ensures that we have $\lfloor \frac{1+0.025}{2} \rfloor \cdot n > \lfloor \frac{1+0.0125}{2} \cdot n \rfloor$, and hence the set of solutions is indeed different with $P=2$ for $\varepsilon=0.025$ and $\varepsilon=0.0125$. In \texttt{moe-2}, it is only layers $2$, $3$ and $4$ of DeepSeek that have at least $160$ nodes (see the dataset details in Appendix~\ref{app:datasets}); as such, for both \texttt{moe-8} and \texttt{moe-2}, we pick the last three layers of DeepSeek.

The results for $P=2$ and $\varepsilon \in \{ 0.0125, 0.025, 0.05 \}$ are already discussed in the main part of the paper. These instances are all solved to optimality, and show clear patterns. Our goal is to conduct similar experiments for $P=4$. Our ablation set instances with $n \geq 160$ again ensure that each problem is distinct if we consider $P=4$ with the same values of $\hat{\varepsilon}=\frac{\varepsilon}{P}$; that is, if we now select $\varepsilon \in \{ 0.025, 0.05, 0.1\}$. As such, we use these three values for $P=4$. Similarly, in order to keep our preliminary experiments for $P=8$ comparable to our main experiments with $P=4$ and $\varepsilon=0.05$, we keep $\hat{\varepsilon}=\frac{\varepsilon}{P}$ fixed, and run our $P=8$ experiments with $\varepsilon=0.1$.

Note that evaluating the results of these experiments for $P=4$ (or larger) is a very difficult task for various reasons. Firstly, recall that we are not able to run our replicating ILPs to optimality on most instances; as such, when comparing the ILPs on two different parameters, the relation of the ratios is heavily dependent on how close the best ILP solutions happen to be to the optimum for this specific instance and choice of parameters. This is made even worse by the fact that we have chosen some of the largest instances for the ablation set (in order to ensure that the solution space is different for distinct $\varepsilon$ values), and these are exactly the instances that are most challenging for the ILP solver, and hence the ones were the solutions are least indicative of the true optimum values.

In the most extreme case, for $P=8$, this actually means that the replicating ILP does not find a better solution than the non-replicating ILP for any of the ablation set instances from \texttt{spmv-rn}, \texttt{moe-8} or \texttt{moe-2}. For the three ablation instances from \texttt{spmv-fg}, we find that the improvement ratios are very similar to the $P=4$ case: the reduction numbers on the same $3$ instances are $69.23\%$, $38.17\%$ and $69.23\%$ for $P=8$ and $\varepsilon=0.1$, whereas they are $71.43\%$, $44.44\%$ and $57.14\%$ for $P=4$ and $\varepsilon=0.05$, respectively.

For $P=4$, we can also compare the improvement ratios for the three cases of $\varepsilon=0.025$, $\varepsilon=0.05$, $\varepsilon=0.1$. This is simplest to do for the \texttt{spmv-rn} and \texttt{moe-2} datasets: here the non-replicating optimum is always found, the optimal costs are always above $0$, and the replicating ILPs find a better solution in all cases except $1$. For \texttt{spmv-rn}, the mean cost reduction is $17.58\%$, $25.38\%$ and $52.26\%$, whereas for \texttt{moe-2}, the mean cost reduction is $61.14\%$, $66.32\%$ and $84.26\%$, for $\varepsilon=0.025$, $\varepsilon=0.05$ and $\varepsilon=0.1$, respectively. The values clearly show an increasing tendency for larger $\varepsilon$, although we note that for \texttt{moe-2}, this is partially because there a large factor decrease for one of the instances. For the \texttt{spmv-fg} dataset, the mean cost reduction is $38.11\%$ and $59.18\%$ for $\varepsilon=0.025$ and $\varepsilon=0.05$; for $\varepsilon=0.1$, we cannot form the geomean, because, two of the instances have cost $0$ with replication (the third one has a cost reduction of $35.29\%$). Finally, for \texttt{moe-8}, there is essentially no cost decrease in these hardest instances: the mean reduction is $2.08\%$ for $\varepsilon=0.025$, and there is no decrease for the larger $\varepsilon$ values.

We also summarize these values for the $P=4$ ablation study in Table~\ref{tab:ablation_p4}. If we prefer an aggregate value over the different datasets in order to compare the different $\varepsilon$ cases, we can restrict ourselves to \texttt{spmv-rn}, \texttt{moe-2}, and the instance from \texttt{spmv-fn} that always has cost above $0$. These $7$ instances give a mean cost reduction of $42.07\%$, $49.13\%$ and $71.58\%$ for the three choices of $\varepsilon$, respectively.

\begin{table}[t]
\caption{Cost reduction with replicating ILPs for hypergraph partitioning, for $P=4$ and different choices of $\varepsilon$ for $3$ selected (large) instances from each dataset. The asterisk indicates that $2$ of the $3$ instances actually reduced the cost to $0$; the number shows the cost reduction for the third instance. \vspace{-8pt}}
  \begin{minipage}[b]{0.48\textwidth}
  \centering
    \renewcommand{\arraystretch}{1.45}
    \begin{tabular}{c || c | c | c | c | }
      & $\!\!$\texttt{spmv-fg}$\!\!$ & $\!\!$\texttt{spmv-rn}$\!\!$ & \texttt{moe-8} & \texttt{moe-2} \\ 
     \hline\hline
     $\varepsilon=0.025$ & 38.11\% & 17.58\% & 2.08\% & 61.14\% \\
     \hline
     $\varepsilon=0.05$ & 59.18\% & 25.38\% & 0\% & 66.32\% \\
     \hline
     $\varepsilon=0.1$ & 35.26\%* & 52.26\% & 0\% & 84.26\% \\
     \hline
    \end{tabular}
  \end{minipage}
  \hfill
  \label{tab:ablation_p4}
\end{table}

\subsection{Scheduling results}

The improvements achieved by our heuristics on the three main datasets are discussed in the main part of the paper. Here we discuss some further aspects and observations from the experiments.

\subsubsection{Heuristics}

In this section, we further discuss the experimental results obtained with the scheduling heuristics.

We note that between the iterations of the advanced heuristic, there are also some further technical steps to clean the schedule by e.g.\ removing communication steps that became unnecessary due to the modifications; these were omitted from the description due space constraints.  

The \texttt{hdb} dataset actually consists of three smaller parts~\cite{BSP_algos_opdas}, labeled `medium' (with $1\,000\,$--$\,2\,000$ nodes), `large' ($5\,000\,$--$\,10\,000$ nodes) and huge ($50\,000\,$--$\,100\,000$ nodes). This also offers a natural opportunity to investigate whether the size of the graphs has a significant impact on the improvements achieved by our algorithms. The improvement numbers for each subdataset for $P \! \in \! \{ 2, 4, 8, 16 \}$ are summarized in Table~\ref{tab:spaa_by_size}. The table shows rather consistent improvement numbers over the different subdatasets, which suggests that our algorithmic approach also scales acceptably to larger instances.

\begin{table}[t]
\caption{Mean cost reduction achieved by the basic and advanced heuristic on the \texttt{hdb} dataset, for $g=4$, $L=20$, organized by $P$ and instance size.}
  \begin{minipage}[b]{0.48\textwidth}
  \centering
    \renewcommand{\arraystretch}{1.45}
    \begin{tabular}{c || c | c | c | c | }
     dataset & $P=2$ & $P=4$ & $P=8$ & $P=16$ \\ 
     \hline\hline
     medium & \makecell{0.74\% /\\ 12.08\%} & \makecell{2.22\% /\\ 16.70\%} & \makecell{4.48\% /\\ 20.11\%} & \makecell{6.58\% /\\ 23.76\%} \\
     \hline
     large & \makecell{0.36\% /\\ 8.42\%} & \makecell{2.17\% /\\ 12.99\%} & \makecell{4.00\% /\\ 16.21\%} & \makecell{6.09\% /\\ 18.67\%} \\
     \hline
     huge & \makecell{0.19\% /\\ 19.89\%} & \makecell{2.34\% /\\ 22.95\%} & \makecell{3.58\% /\\ 23.19\%} & \makecell{5.87\% /\\ 27.86\%} \\
     \hline
    \end{tabular}
  \end{minipage}
  \hfill
  \label{tab:spaa_by_size}
\end{table}

Furthermore, recall that the ablation study in Table~\ref{tab:sched_ablation} shows the effect of enabling each algorithmic component of the advanced heuristic separately. For completeness, we also provide a different kind of ablation study of the same phenomenon, where we start with the basic heuristic and incrementally add the components of the advanced heuristic to this. The result of this alternative study is summarized in Table~\ref{tab:sched_ablation2}. The table again confirms that superstep merging can achieve a significant cost reduction, even after batch replication. It also shows that after we have batch replication and superstep merging, adding superstep replication sometimes even leads to slightly weaker results. This is partially an artifact of the ordering of our algorithmic components: superstep replication is applied immediately after supersteps have been merged, and thus there are several new improvement opportunities with replication; however, other methods like batch replication could remove the same communication steps with a smaller increase in computation costs. Altogether, this again confirms that batch replication and superstep merging are the two truly valuable components of the advanced heuristic.

\begin{table}[htbp]
\caption{Ablation study: cost decrease when activating the components of the heuristic separately, starting with the basic heuristic (B), then adding batch replication (BR), superstep merging (SM) and superstep replication (SR) in this order. \vspace{-8pt}}
  \begin{minipage}[b]{0.48\textwidth}
  \centering
    \renewcommand{\arraystretch}{1.45}
    \begin{tabular}{c || c | c | c || c | c | c | }
     \multirow{2}{*}{$\!\!\!\!$heuristic$\!\!\!$} & \multicolumn{3}{|c|}{$P=8$,  $g=4$,  $L=20$} & \multicolumn{3}{|c|}{$P=8$,  $g=16$,  $L=20$} \\
     \hhline{~|------|}
      & \texttt{hdb} & \texttt{psdd} & $\!$\texttt{sptrsv}$\!$ & \texttt{hdb} & \texttt{psdd} & $\!$\texttt{sptrsv}$\!$ \\ 
     \hline\hline
     B & $\!$4.11\%$\!$ & $\!$3.74\%$\!$ & $\!$6.29\%$\!$ & $\!$6.12\%$\!$ & $\!$5.03\%$\!$ & $\!$12.71\%$\!$ \\
     \hline
     $\!\!$ B+BR & $\!$11.90\%$\!$ & $\!$18.38\%$\!$ & $\!$9.56\%$\!$ & $\!$21.26\%$\!$ & $\!$28.99\%$\!$ & $\!$20.33\%$\!$ \\
     \hline
     $\!\!$ \makecell{B+BR\\+SM} & $\!$19.14\%$\!$ & $\!$23.12\%$\!$ & $\!$11.57\%$\!$ & $\!$41.81\%$\!$ & $\!$56.51\%$\!$ & $\!$26.36\%$\!$ \\
     \hline
     \makecell{$\!\!\!$B+BR\\$\!\!\!$+SM+SR$\!\!$}  & $\!$19.17\%$\!$ & $\!$23.13\%$\!$ & $\!$11.61\%$\!$ & $\!$40.97\%$\!$ & $\!$56.49\%$\!$ & $\!$26.45\%$\!$ \\
     \hline
    \end{tabular}
  \end{minipage}
  \hfill
  \label{tab:sched_ablation2}
\end{table}

\subsubsection{ILPs}

The ILP formulation for BSP scheduling has already been introduced in~\cite{BSP_DAG_opdas} and experimentally studied in~\cite{BSP_algos_opdas}. The formulation relies on binary variables $x_{v,p,s}$ for all nodes $v$, processors $p$ and supersteps $s$, indicating whether $v$ is computed on $p$ in superstep $s$. Other key variables express whether $v$ is already present on $p$ in superstep $s$, and whether $v$ is sent from $p_1$ to $p_2$ in superstep $s$. The number of variables is thus much larger than in the non-replicating hypergraph partitioning ILP: there is also a time dimension $s$ in the indices, and the communication variables scale with $P^2$. This already indicates why scheduling is much more challenging as an ILP than partitioning.

The generalization of this ILP to replication is surprisingly simple. The original formulation has the constraint $\sum_{p} \sum_{t} x_{v,p,t} =1$ for every node $v$; we replace this by $\sum_{p} \sum_{t} x_{v,p,t} \geq 1$.

We use this ILP formulations to also derive optimal schedules on very small computation DAGs with COPT. Due to the small DAG sizes ($40$--$80$ nodes), we only consider $P=2$ and $P=4$ here. We pick $g=4$ as in our main experiments, but scale down $L$ to $5$, because our preliminary experiments showed that the original $L=20$ is too large to parallelize these tiny graphs; the optimum is very often a sequential schedule on a single processor.

We use a shorter time limit of $2$ hours for the ILPs in these experiments. This allows to solve $8$ of the instances to optimality for $P=2$, and $4$ of them for $P=4$; this already gives a good indication of the general performance of computational DAGs of this size. The mean cost reductions for $P=2$ and $P=4$ on these instances is $12.33\%$ and $21.96\%$. In order to expand this dataset, we can also include the instances where the solver could prove that the gap between the best found solution and the best proven lower bound is at most $25\%$. In our experience, the best solution cost is rather unlikely to change after this point in the solving process, and hence we can also consider these costs `essentially optimal'. For $P=2$ and $P=4$, respectively, this relaxation increases the number of instances to $12$ and $6$, and modifies the mean cost reduction numbers to $12.99\%$ and $21.08\%$. These are almost identical to the same reduction values in the optimal set.

The improvement numbers above are very similar to those obtained with our advanced scheduling heuristic on the \texttt{hdb} and \texttt{psdd} datasets with comparable parameters. This suggests that the heuristic scheduling costs in our experiments might also be a good indicator of the cost ratio of the actual optimum costs with and without replication. The data also shows that in this small dataset, the optimal non-replicating solution in this small dataset always consists of more than $1$ superstep, so these are indeed non-trivial, parallel schedules, thus confirming that this is a reasonable choice of parameters.

\end{document}